\numberwithin{equation}{subsection}
\theoremstyle{definition}
\newtheorem{ass}{Assumption}[section]
\newtheorem{theorem}[ass]{Theorem}
\newtheorem{prop}[ass]{Proposition}
\newtheorem{lemma}[ass]{Lemma}
\newtheorem{definition}[ass]{Definition}
\newtheorem{corollary}[ass]{Corollary}
\newtheorem{rem}[ass]{Remark}
\def\indexname{Index of terminology}
\newcommand{\captionfonts}{\footnotesize}
\long\def\@makecaption#1#2{%
  \vskip\abovecaptionskip
  \sbox\@tempboxa{{\captionfonts #1: #2}}%
  \ifdim \wd\@tempboxa >\hsize
    {\captionfonts #1: #2\par}
  \else
    \hbox to\hsize{\hfil\box\@tempboxa\hfil}%
  \fi
  \vskip\belowcaptionskip}
\definecolor{darkolivegreen}{rgb}{0.333333, 0.419608, 0.1843140}
\newcommand\varpm{\mathbin{\vcenter{\hbox{%
  \oalign{\hfil$\scriptstyle+$\hfil\cr
          \noalign{\kern-.3ex}
          $\scriptscriptstyle({-})$\cr}%
}}}}
\newcommand\varmp{\mathbin{\vcenter{\hbox{%
  \oalign{$\scriptstyle{\phantom{\cdot}-}$\cr
          \noalign{\kern-.3ex}
          \hfil$\scriptscriptstyle({+})$\hfil\cr}%
}}}}
\def\printnotation{{%
\def\indexname{Index of notation}
\begin{theindex}
\@input{\jobname.ntn}
\end{theindex}
}}
\begin{document}



\title{Stability study of a model for the Klein-Gordon equation  
in Kerr space-time~II}

\author[1,2,3]{Horst Reinhard Beyer}
\author[3]{Miguel Alcubierre}
\author[4]{Miguel Megevand}
\affil[1]{Instituto Tecnol\'ogico Superior de Uruapan, 
Carr. Uruapan-Carapan No. 5555, Col. La Basilia, Uruapan, 
Michoac\'an, M\'exico}
\affil[2]{Goethe Universit\"{a}t Frankfurt am Main, Institut f\"{u}r Theoretische Physik,  Max-von-Laue-Strasse  1,  60438 Frankfurt, Germany}
\affil[3]{Instituto de Ciencias Nucleares, Universidad Nacional
  Aut\'onoma de M\'exico, Circuito Exterior C.U., A.P. 70-543,
  M\'exico D.F. 04510, M\'exico}
\affil[4]{Instituto de F\'isica Enrique Gaviola, CONICET. Ciudad Universitaria, 5000 C\'ordoba, Argentina.}
\date{\today}                                     

\maketitle

\begin{abstract}
The present paper is a follow-up of our previous paper that derives a slightly simplified model equation
for the Klein-Gordon equation, describing the propagation of a scalar field of mass $\mu$
in the background of a rotating black hole and, among others, supports 
the instability of the field down to 
$a/M \approx 0.97$. The latter result was derived numerically. This paper
gives corresponding rigorous results, supporting instability of the field down to 
$a/M \approx 0.979796$.
\end{abstract}

\section{Introduction}
\label{introduction}

The question of the stability of the solutions of the 
Klein-Gordon equation, describing a massive
scalar field inside the gravitational field of rotating (Kerr-) black hole, (in Boyer-Lindquist coordinates,) is an
important model problem, in view of the stability of the Kerr metric.  
Results of Damour, Deruelle and Ruffini (\cite{damour}, 1976), of 
Zouros and Eardley (\cite{zouros}, 1979) and Detweiler (\cite{detweiler}, 1980)  
indicate the existence of unstable modes
for `small' masses of the black hole. This is a surprising 
result because the Klein-Gordon equation, describing 
a massive field, is a perturbation of the
wave equation on Kerr background, by a positive potential. Intuitively, it might be expected that such perturbations stabilize solutions, and the solutions of the latter equation are indeed stable
\cite{whiting,finster,dafermos,shlapen2}. The rigorous 
proof of the results above was of considerable interest, since, if true,  
this could indicate that infalling matter destabilizes 
Kerr black holes. Beyer (\cite{beyer4}, 2011) proves that the restrictions of 
the solutions of the separated, in the azimuthal coordinate, Klein-Gordon 
field are stable for sufficiently large masses $\mu > 0$ of the 
field. 
\begin{equation} \label{massineq}
\mu \geqslant \frac{|m|a}{2Mr_{+}} 
\sqrt{1 + \frac{2M}{r_{+}}}
\, \, .
\end{equation}
Here $M > 0$ is the mass 
of the black hole, 
$0 \leqslant a < M$ is the rotational parameter, 
$m \in {\mathbb{Z}}$ is the 
`azimuthal separation parameter' and 
\begin{equation*}
r_{+} := M + 
\sqrt{M^2 - a^2} \, \, .
\end{equation*}
The result is consistent with \cite{damour}, 
but contradicts results of \cite{zouros}. A numerical investigation by Furuhashi and Nambu (\cite{furuhashi}, 2004) finds unstable modes for $\mu M \sim 1$ and 
$(a/M) = 0.98$. A numerical investigation by Cardoso et al. (\cite{cardoso}, 2004), finds unstable 
modes for $\mu M \leqslant 1$ and $0.98 \leqslant (a/M) < 1$.
A numerical investigation by Strafuss and Khanna (\cite{khanna}, 2005)
finds unstable 
modes for $\mu M \sim 1$ and $(a/M) = 0.9999$.
A numerical investigation by Konoplya and Zhidenko (\cite{konoplya}, 2006) confirms 
the results of Beyer (\cite{beyer1}, 2001, \cite{beyer4}, 2011). In addition, no 
unstable modes are found
also for $\mu M \ll 1$ 
and $\mu M \sim 1$. ($0 \leqslant a \leqslant 
0.995)$. The latter result contradicts, 
in particular, analytical results from Detweiler \cite{detweiler}.
An analytical study by Hod and Hod (\cite{hod}, 2010) finds
unstable modes for $\mu M \sim 1$ with a growth rate,
\begin{equation*} 
1.7 \cdot 10^{-3} M^{-1} \, \, , 
\end{equation*}
which is four orders of 
magnitude larger than previously estimated. 
\newline
\linebreak
There was a mounting evidence that the solutions of the Klein-Gordon equation on a Kerr background are unstable, if the estimate (\ref{massineq}) is violated.
By negelecting ``small" terms, in the sense of the used operator-theoretic methods, 
Beyer, Alcubierre \& Megevand, (\cite{beyer5}, 2013) create a spherically symmetric model equation 
that is closely related to the Klein-Gordon equation on a Kerr background and whose modes can 
be expressed in terms of Coulomb wave functions.
Analogous to the Klein-Gordon equation on a Kerr background,
the model equation is of the form 
\begin{equation} \label{abstractwaveequation}
(u^{\prime})^{\prime}(t) + i B u^{\prime}(t) + A  u(t) = 0 \, \, , 
\end{equation} 
for every $t \in {\mathbb{R}}$, where $u$ is the unknown function, assuming values in a weighted 
$L^2$-space $X$,  
$A$ is a densely-defined, linear and self-adjoint operators in $X$
and $B$ is a bounded linear and self-adjoint operator in $X$. The operators $A$ and $B$
do not commute, as is the case also for the Klein-Gordon equation on a Kerr background.
\footnote{The spectral parameter $\lambda$ is a kind of "frequency." For this, 
we note that if $u(t) = e^{i \lambda t} \xi$, for every $t \in {\mathbb{R}}$ and 
$\xi$ is an element of the domain of $A$, then (\ref{abstractwaveequation}) would lead to the
equation $(A - \lambda B - \lambda^2)\xi = 0$.}
The stability of the solutions of (\ref{abstractwaveequation}) is governed by the spectrum of 
the corresponding operator polynomial 
\begin{equation} \label{operatorpolynomial}
A - \lambda B - \lambda^2
\end{equation}
where $\lambda \in {\mathbb{C}}$, i.e., by those $\lambda \in {\mathbb{C}}$, 
for which the operator in (\ref{operatorpolynomial}) is not bijective. 
The solutions of the model equation are unstable down to rotational parameters 
$a/M \approx 0.9718$.
Subsequently, Shlapentokh-Rothman, (\cite{shlapen}, 2014) proved the instability of the solutions of the 
Klein-Gordon equation, describing a massive
scalar field on a Kerr background, in the following sense. For each choice of $m \in {\mathbb{Z}}^{*}$, there is a countable family of intervals of
masses $\mu$ associated to exponentially growing solutions (indexed by $l \in \{|m|, |m| + 1,\dots \}$). These intervals
have an accumulation point at
\begin{equation} \label{massbound}
\frac{|m|a}{2Mr_{+}} \, \, .
\end{equation}
In addition, these unstable modes 
exhibit superradiance, i.e., the corresponding frequency $\omega$ satisfies the inequality
\begin{equation*}
m a \, {\textrm{Re}}(\omega) -  2Mr_{+} |\omega|^2 > 0 \, \, .
\end{equation*}
Still, there is very much an implicit dependence on $a/M$, as in the numerical results.
The precise dependence of the instability on the parameters, including 
the value of $a/M$ triggering the   
onset of the instability is not yet clear. 
\newline
\linebreak 
The present paper is a follow-up of our paper \cite{beyer5}. It 
continues the study of our model problem, with the purpose of shedding some light on 
the dependence of the instability on the parameters, including 
the value of $a/M$ triggering the   
onset of the instability. Here, it needs to be taken into account 
that \cite{beyer5} 
reduces the finding of 
unstable modes of (\ref{abstractwaveequation}) to the finding of the 
solutions of a quartic inside the subset  
\begin{figure} 
\centering
\includegraphics[width=5.6cm,height=5.6cm]{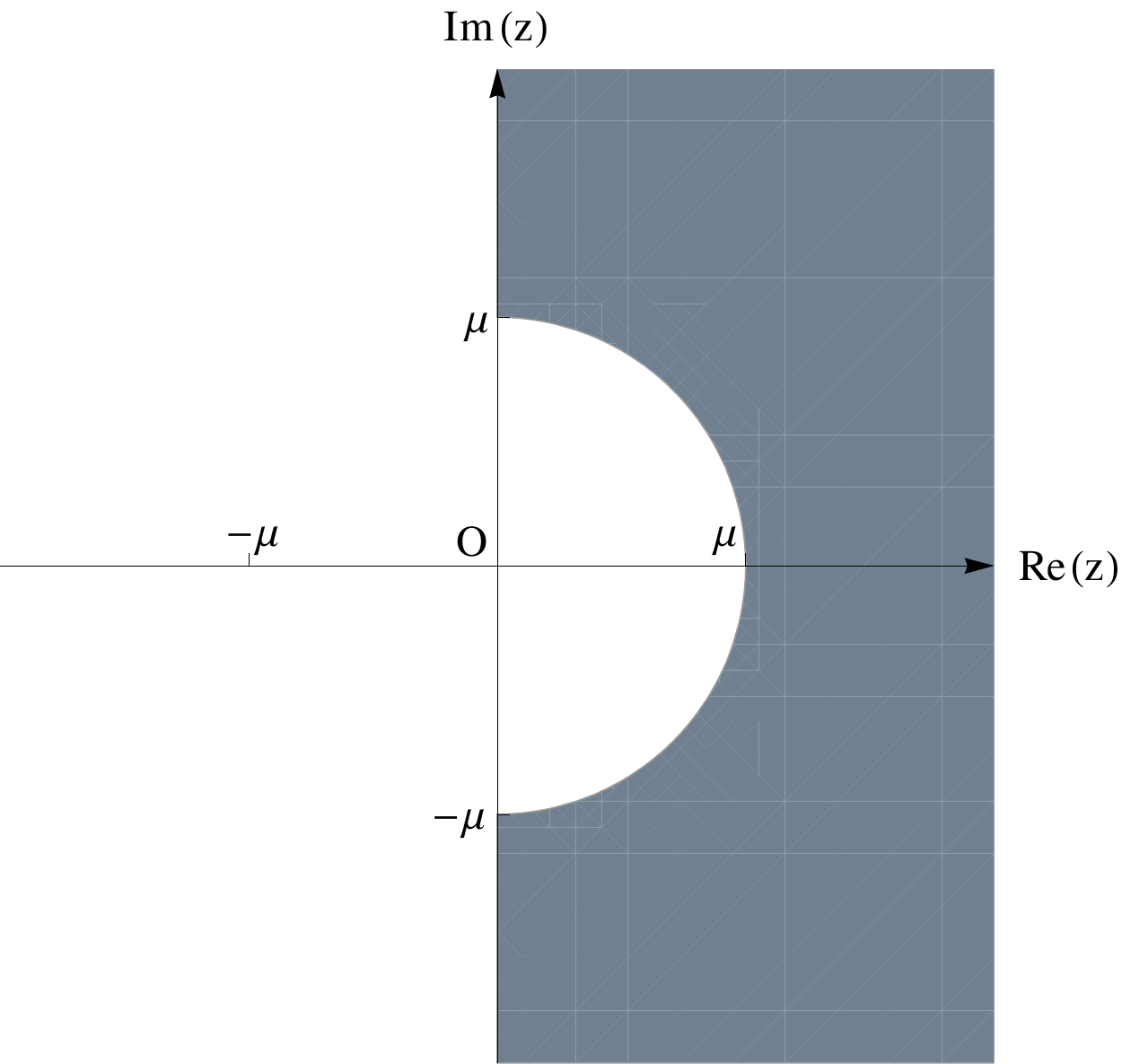}
\caption{Sketch of the subset $\Omega_1$ of the complex plane.}
\label{fig1}
\end{figure}

\begin{equation} \label{subset}
\Omega_1 := ({\mathbb{C}} \setminus B_{\mu}(0)) 
\cap ((0,\infty) \times {\mathbb{R}})
\end{equation}
of the complex plane, but finds these
solutions only numerically. Here, ${\mathbb{C}}$ denotes the field of complex numbers, 
$B_{\mu}(0)$ denotes the closed ball of radius $\mu$ around $0$ and 
$(0,\infty) \times {\mathbb{R}}$ denotes the open right half-plane.
The present paper focuses on obtaining analytical 
information on these solutions. For the study of the model problem, we 
assume throughout that
\begin{equation*}
M > 0 \, \, , \, \, 0 < a < M \, \, , \, \, \mu >  0 \, \, , \, \, m \in {\mathbb{Z}}
\, \, , \, \, l \in \{|m|, |m| + 1, \dots\} \, \, .
\end{equation*}
The following is Lemma~3.19 in \cite{beyer5}, reducing the finding of 
unstable modes of (\ref{abstractwaveequation}) to the solution of a quartic 
equation and providing the starting point of the investigation.
\begin{lemma}
If $R = r_{+}$, i.e., $R_{-} = 2 (M^2 - a^2)^{1/2}$, 
and $\lambda$ satisfies the condition   
\begin{equation} \label{exceptions1}
\lambda \neq 
- \, \frac{1}{2 M r_{+}} \, [ma + i k (M^2 - a^2)^{1/2}]
\, \, , 
\end{equation}
where $k \in {\mathbb{Z}}$,
then $\lambda \in {\mathbb{R}} \times (-\infty,0)$ is such 
that   
$\ker(A - \lambda B - \lambda^2)$ is non-trivial if and only if  
\begin{equation*}
\lambda = - \, \frac{i}{2z} \, (z^2 - \mu^2) \, \, , 
\end{equation*}
where $z \in \Omega_1$ satisfies
\begin{align}
z^4 & + \frac{(2n + 1)(M^2 - a^2)^{1/2} + i ma}{M(2r_{+}+M)}
\, z^3 + \frac{c_{l}- 2 M^2 \mu^2}{M(2r_{+}+M)} \, z^2  \nonumber \\
& + \frac{(2n + 1)(M^2 - a^2)^{1/2} + i ma}{M(2r_{+}+M)} 
\, \mu^2 z + \frac{M}{2r_{+}+M} \, \mu^4 = 0 \, \, ,  \label{spectralcondition}
\end{align}
for some $n \in {\mathbb{N}}$ and where $c_l := l (l+1)$.
\end{lemma}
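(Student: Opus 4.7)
My plan is to separate variables in the spherically-symmetric model, reduce the eigenvalue condition $(A - \lambda B - \lambda^2)\xi = 0$ to a radial ODE on $(r_+,\infty)$, recognize this ODE as a Coulomb wave equation --- as anticipated by the design of the model in \cite{beyer5} --- and turn the admissibility conditions at its two singular endpoints into the stated quartic through the substitution $\lambda = -\frac{i}{2z}(z^2-\mu^2)$. The hypothesis $R = r_+$ fixes the inner endpoint of the radial domain at the event horizon.

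Concretely, expanding $\xi$ in spherical harmonics $Y_l^m$ diagonalizes the angular part and introduces the angular eigenvalue $c_l = l(l+1)$. The resulting radial equation, after the standard rescaling of $r$, takes Coulomb form $w'' + [1 - 2\eta/\rho - L(L+1)/\rho^2]w = 0$ with $L(L+1) = c_l$, a wave number $k$ (via $\rho = 2kr$) and a Sommerfeld parameter $\eta$, both of which are explicit functions of $\lambda,\mu,m,a,M$. For $\lambda$ in the open lower half-plane, square-integrability at infinity with respect to the weight of $X$ selects one of the two asymptotic behaviours $e^{\pm ikr}r^{\pm i\eta}$, and the domain condition at the regular-singular point $r=r_+$ selects one of the two Frobenius branches there. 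Simultaneous admissibility of these one-dimensional selections is the Coulomb bound-state quantization condition, equivalent to the termination of the confluent hypergeometric solution as a generalized Laguerre polynomial of degree $n\in{\mathbb{N}}$.

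The substitution $\lambda = -\frac{i}{2z}(z^2-\mu^2)$ is designed to rationalize this quantization. It is a bijection from $\Omega_1$ onto the open lower half-plane: writing $z = x+iy$, a direct calculation gives
\begin{equation*}
\operatorname{Im}\lambda \;=\; -\,\frac{x(x^2+y^2-\mu^2)}{2(x^2+y^2)},
\end{equation*}
which is negative precisely when $x>0$ and $|z|>\mu$. Moreover, $z$ is exactly the natural Coulomb wave number $-i\lambda + \sqrt{\mu^2 - \lambda^2}$ on the branch determined by $\Omega_1$, so the substitution clears the square roots appearing in $k(\lambda)$ and $\eta(\lambda)$. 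The quantization condition then becomes a rational equation in $z$ with a common denominator proportional to $z\cdot M(2r_++M)$; clearing this denominator and collecting by powers of $z$ yields (\ref{spectralcondition}), with the coefficients $(2n+1)(M^2-a^2)^{1/2}+ima$ arising from the combination $\eta + n + L + 1$ at the horizon-contributed terms and $c_l - 2M^2\mu^2$ from the surviving $\lambda^2$ terms. The exceptional locus (\ref{exceptions1}) is precisely where a denominator in $\eta(\lambda)$ vanishes --- the Coulomb reduction and the substitution are ill-defined there --- and it must be excluded for the equivalence to hold.

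The principal obstacle will be the horizon analysis: identifying, within the weighted $L^2$-space $X$, the correct one-dimensional admissibility subspace at $r=r_+$, and verifying that, once paired with the decay condition at infinity, it indeed forces the terminating polynomial-type Coulomb solutions and nothing else. This is a domain/deficiency-index question for $A$ in the weighted space, and is the step that actually uses the concrete spectral-theoretic framework of \cite{beyer5}; once it is settled, the passage from the Coulomb quantization to the quartic (\ref{spectralcondition}) is a straightforward algebraic computation.
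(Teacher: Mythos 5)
The first thing to note is that the paper you are working from does not prove this statement at all: it is imported verbatim as Lemma~3.19 of \cite{beyer5}, so there is no in-paper argument to compare against. Judged on its own terms, your proposal is a strategy outline rather than a proof, and the part you yourself flag as ``the principal obstacle'' is exactly where the mathematical content lives. The lemma is an equivalence, and the ``only if'' direction requires knowing precisely which solutions of the radial equation lie in the domain of the operator (equivalently, which Frobenius branch at $r=r_+$ and which exponential branch at infinity are admissible in the weighted space $X$); without that, nothing forces the confluent hypergeometric solution to terminate, and the quantization integer $n$ --- hence the entire family of quartics indexed by $n$ --- never appears. Moreover, you never actually derive the radial equation, never compute the Coulomb parameters $\eta(\lambda)$ and $k(\lambda)$ as functions of $\lambda,\mu,m,a,M$, and never carry out the algebra that produces the specific coefficients $(2n+1)(M^2-a^2)^{1/2}+ima$, $c_l-2M^2\mu^2$ and $M\mu^4/(2r_++M)$, so the assertion that the quantization condition ``yields (\ref{spectralcondition})'' is stated, not shown. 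The same applies to your explanation of the exceptional set (\ref{exceptions1}): plausible, but unverified.

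On the points that can be checked: your formula for $\operatorname{Im}\lambda$ is correct, and the bijectivity of $z\mapsto -\frac{i}{2z}(z^2-\mu^2)$ from $\Omega_1$ onto the open lower half-plane is right (it is the Joukowski map in disguise, which the paper's subsequent substitutions $u=iz/\mu$ and $w=1/u$ make explicit). However, your identification of $z$ with the Coulomb wave number contains a sign error: from $\lambda=-\frac{i}{2z}(z^2-\mu^2)$ one finds $\mu^2-\lambda^2=\bigl(\frac{z^2+\mu^2}{2z}\bigr)^{2}$, so that $z=i\lambda+\sqrt{\mu^2-\lambda^2}$ on the branch with $\sqrt{\mu^2-\lambda^2}=(z^2+\mu^2)/(2z)$, not $z=-i\lambda+\sqrt{\mu^2-\lambda^2}$. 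Branch and sign choices of exactly this kind decide which endpoint behaviour is selected, so they cannot be left loose. In short: your plan is consistent with how \cite{beyer5} actually obtains the lemma (the model was constructed so that its mode equation is of confluent-hypergeometric/Coulomb type), but as written it has a genuine gap at the endpoint/domain analysis needed for the ``only if'' direction, and it does not perform the computation that produces the quartic.
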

We note that if $z \in \Omega_1$ satisfies 
(\ref{spectralcondition}), then $z^{*}$ is a solution 
of (\ref{spectralcondition}), where $m$ is replaced by 
$-m$, that is contained in 
$\Omega_1$. Further, we note that the 
coefficients of the first and third power of $z$ of the quartic 
(\ref{spectralcondition}) 
are neither real nor 
purely imaginary, if $m \neq 0$. The remaining coefficients are real. For the model problem, there
is a stability condition given by (\ref{massineqmodel})
from 
Corollary~3.16 of \cite{beyer5}:
\begin{corollary} \label{lowerbound}
If $R = r_{+}$, i.e., $R_{-} = 2 (M^2 - a^2)^{1/2}$, 
\begin{equation} \label{massineqmodel}
 \mu^2 \leq \frac{l(l+1)}{2 M (r_{+} + 2 M)} \, \, ,
\end{equation}
and $\lambda \in {\mathbb{C}} \setminus {\mathbb{R}}$ satisfies the condition  (\ref{exceptions1}),
then 
$\ker(A - \lambda B - \lambda^2)$ is trivial. 
\end{corollary}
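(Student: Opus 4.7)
The strategy I would follow is a numerical-range (energy) argument exploiting the self-adjointness of $A$ and $B$. Suppose, for contradiction, that $\xi \in \ker(A - \lambda B - \lambda^2)$ is nonzero with $\|\xi\| = 1$, for some $\lambda \in {\mathbb{C}} \setminus {\mathbb{R}}$ satisfying (\ref{exceptions1}) and $\mu^2 \leqslant l(l+1)/[2M(r_{+}+2M)]$. Pairing the equation $(A - \lambda B - \lambda^2)\xi = 0$ with $\xi$ in the ambient inner product yields the scalar identity
\[
\lambda^2 + \beta \lambda - \alpha = 0,
\]
where $\alpha := \langle \xi, A\xi \rangle$ and $\beta := \langle \xi, B\xi \rangle$ are both real, since $A$ and $B$ are self-adjoint. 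Hence $\lambda$ is a root of a real quadratic; the assumption $\lambda \notin {\mathbb{R}}$ forces the discriminant to be strictly negative, $\beta^2 + 4\alpha < 0$, and in particular $\alpha = \langle \xi, A\xi \rangle < 0$.

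The corollary therefore reduces to showing that, under (\ref{massineqmodel}), the quadratic form of $A$ is non-negative on its form domain. The plan is to use the explicit radial reduction of $A$ from \cite{beyer5} on the $(m,l)$-invariant subspace selected by separation, identify the effective potential $V = V_{m,l,\mu}(r)$ arising on this subspace, and establish the pointwise bound $V(r) \geqslant 0$ on the radial interval $[R, \infty)$ with $R = r_{+}$. Structurally, $V$ is expected to contain a positive centrifugal contribution proportional to $l(l+1)$ competing against a mass contribution proportional to $\mu^2$, multiplied by an $r$-dependent weight; minimizing the weight in the competition should produce exactly the factor $2M(r_{+}+2M)$ that appears on the right-hand side of (\ref{massineqmodel}), so that the threshold in the hypothesis is precisely the sharp condition for $V \geqslant 0$.

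The main obstacle is this pointwise estimate: it amounts to an explicit one-variable analysis of the potential obtained after the reduction, together with a verification that the sharp constant $2M(r_{+}+2M)$ is achieved in the appropriate limit of the radial variable. Once $V \geqslant 0$ is established, one concludes $\langle \xi, A\xi \rangle \geqslant 0$ for every $\xi$ in the form domain of $A$, contradicting $\alpha < 0$ and forcing $\ker(A - \lambda B - \lambda^2) = \{0\}$. The exclusion (\ref{exceptions1}) is not used inside the numerical-range step itself, and is carried through only because it is needed for the operator-theoretic setup in which $A$ and $B$ are defined as self-adjoint operators of the model.
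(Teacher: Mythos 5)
This corollary is not proved in the present paper at all: it is imported verbatim as Corollary~3.16 of \cite{beyer5}, so there is no in-paper argument to compare yours against. Your opening step is nevertheless correct and is the standard route for quadratic operator pencils of this type: if $\xi\in\ker(A-\lambda B-\lambda^{2})$ with $\|\xi\|=1$, then taking the inner product with $\xi$ and using the self-adjointness of $A$ and $B$ shows that $\lambda$ solves the real quadratic $\lambda^{2}+\beta\lambda-\alpha=0$ with $\alpha=\langle\xi,A\xi\rangle$, $\beta=\langle\xi,B\xi\rangle$, so a non-real $\lambda$ forces $\beta^{2}+4\alpha<0$ and in particular $\alpha<0$. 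Reducing the corollary to the non-negativity of the form of $A$ is therefore legitimate, and $A\geqslant 0$ is indeed a sufficient (not necessary) condition for the absence of non-real pencil eigenvalues.

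The genuine gap is that the second half of your argument is a plan rather than a proof. Everything that actually produces the constant $2M(r_{+}+2M)$ is phrased in the conditional (``is expected to contain'', ``should produce exactly the factor''), and you yourself flag the pointwise estimate on the effective potential as ``the main obstacle.'' Nothing in your proposal exhibits the radial reduction of $A$, the weight $\rho$, or the potential $V_{m,l,\mu}$, and nothing verifies that $V\geqslant 0$ on $(r_{+},\infty)$ holds precisely when $\mu^{2}\leqslant l(l+1)/[2M(r_{+}+2M)]$. This is where all the content of the corollary lives: it is not automatic that the threshold arises from pointwise positivity of $V$ at all (it could equally require a Hardy-type inequality to absorb a locally negative part of $V$ into the kinetic term, or a different decomposition of the quadratic form), and without the explicit expression for $V$ from \cite{beyer5} you cannot tell which. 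You should also confirm that the quadratic form of $A$ on the relevant domain really is $\int\triangle|u'|^{2}+\int\rho\,\triangle\,V|u|^{2}$ with no boundary contribution at $r=r_{+}$, since otherwise $V\geqslant 0$ alone does not yield $\langle\xi,A\xi\rangle\geqslant 0$. As it stands the proposal establishes the (easy) reduction but leaves the quantitative heart of the statement unproved.
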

As a consequence, for the case $m=0$, $\ker(A - \lambda B - \lambda^2)$ is trivial.
Hence, in the following, we assume throughout that
\begin{equation*}
m \in {\mathbb{Z}}^{*} \, \, .
\end{equation*}
In following sections, we proceed to show the existence of solutions 
of \eqref{spectralcondition} inside $\Omega_1$, and hence
the existence of unstable modes in a subregion of the parameter
space.
Although the roots of any forth degree polynomial, such as that
in~\eqref{spectralcondition}, are known explicitly, the 
 expressions, seen as functions of all the parameters involved
(i.e. $a$, $\mu$, $n$, $l$ and $m$), are too complicated to get
any
intuitive understanding of the problem just from their analytical form. 
Thus, we are using other analytical methods for determining the location of the
roots, in particular, {\it Routh-Hurwitz criteria}, for the localization of roots in {\it half-planes}, the {\it Schur-Cohn algorithm}, for the localization of roots inside the {\it closed
unit disk}, {\it Rouche's theorem}, for the localization of roots in {\it general 
domains}, calculation of discriminants of polynomials and direct estimates. We give $2$
approaches. Approach~1 shows the existence of roots in (\ref{subset}), 
for sufficiently large $a/M$, without giving a lower bound for $a/M$.
Approach~2 shows the existence of roots in (\ref{subset}),
for $a/M$ satisfying the inequality
(\ref{newinstabilitycondition2}).

\section{Approach~1}
In the following, we use conformal transformations to transform $\Omega_1$
into a subset ($\Omega_2$, see (\ref{Omega2})) of the complex plane that is suitable,
for the application, in particular, of the Schur-Cohn algorithm. 
We note by $P$ the polynomial in 
(\ref{spectralcondition}), i.e., 
\begin{align*}
P(z) := z^4 & + \frac{(2n + 1)(M^2 - a^2)^{1/2} + i ma}{M(2r_{+}+M)}
\, z^3 + \frac{c_{l}- 2 M^2 \mu^2}{M(2r_{+}+M)} \, z^2  \nonumber \\
& + \frac{(2n + 1)(M^2 - a^2)^{1/2} + i ma}{M(2r_{+}+M)} 
\, \mu^2 z + \frac{M}{2r_{+}+M} \, \mu^4  \, \, , 
\end{align*}
for every $z \in {\mathbb{C}}$.
Then, 
\begin{align*} 
& \frac{P(z)}{\mu^4} = \left(i \, \frac{z}{\mu}\right)^4 - \frac{ma - i (2n + 1)(M^2 - a^2)^{1/2}}{\mu M(2r_{+}+M)}
\, \left(i \, \frac{z}{\mu}\right)^3 \nonumber \\
& + \frac{2 M^2 \mu^2  - l(l+1)}{\mu^2M(2r_{+}+M)} \, \left(i \, \frac{z}{\mu}\right)^2  \\
& + \frac{ma - i (2n + 1)(M^2 - a^2)^{1/2}}{\mu M(2r_{+}+M)} 
\, \left(i \, \frac{z}{\mu}\right) + \frac{M}{2r_{+}+M} \, \, ,   \nonumber 
\end{align*}
 and hence 
\begin{equation*}
P(z) = \mu^4 f\left(i \, \frac{z}{\mu}\right) \, \, , 
\end{equation*}
for every $z \in {\mathbb{C}}$, where 
\begin{definition}{(\bf Definition of $f$)}
We define $f : {\mathbb{C}} \rightarrow {\mathbb{C}}$ by 
\begin{align} \label{definitionoff}
f(u) &:= u^4 - \frac{ma - i (2n + 1)(M^2 - a^2)^{1/2}}{\mu M(2r_{+}+M)}
\, u^3 + \frac{2 M^2 \mu^2  - l(l+1)}{\mu^2M(2r_{+}+M)} \, u^2  \nonumber \\
& \quad \, \, \, + \frac{ma - i (2n + 1)(M^2 - a^2)^{1/2}}{\mu M(2r_{+}+M)} 
\, u + \frac{M}{2r_{+}+M}  \, \, ,  
\end{align}
for every $u \in {\mathbb{C}}$. 
\end{definition}

\begin{figure} 
\centering
\includegraphics[width=9.52380952380952380952cm,height=5.6cm]{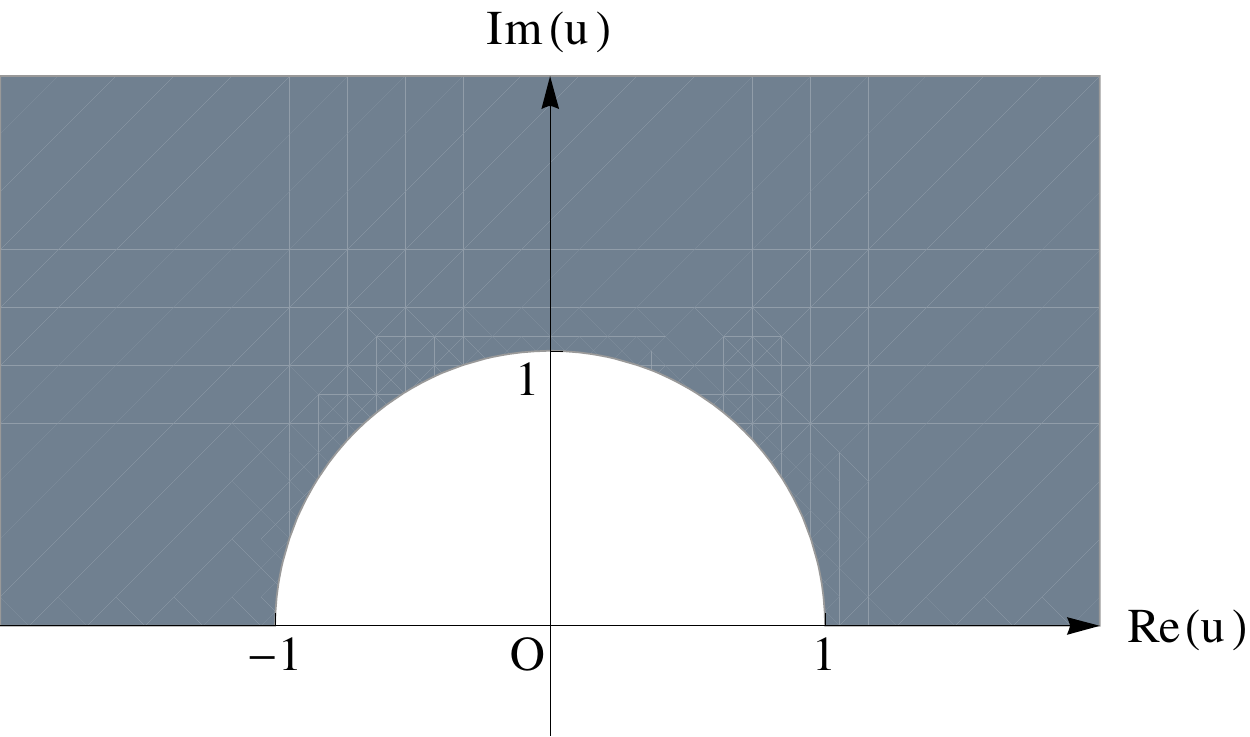}
\caption{The domain of values of $u$ inside the domain of $f$, leading on unstable $\lambda$ is given by 
the complement of the upper closed half-disk, shaded in gray.}.
\label{fig2}
\end{figure}

As a consequence, we arrive at the following:

\begin{lemma}{(\bf Instability in terms of roots of $f$)}
If $R = r_{+}$, i.e., $R_{-} = 2 (M^2 - a^2)^{1/2}$, 
and $\lambda$ satisfies (\ref{exceptions1}),
then $\lambda \in {\mathbb{R}} \times (-\infty,0)$ is such 
that   
$\ker(A - \lambda B - \lambda^2)$ is non-trivial if and only if  
\begin{equation*}
\lambda = - \frac{\mu}{2u } \, (1 + u ^2) = - \frac{\mu}{2} \left( u + \frac{1}{u} \right) \, \, ,
\end{equation*}
for a root $u$ of $f$ contained in $({\mathbb{C}} \setminus B_{1}(0)) \cap ({\mathbb{R}} \times (0,\infty))$. 
\end{lemma}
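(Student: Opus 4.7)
The plan is to translate the preceding lemma, which characterizes non-triviality of $\ker(A - \lambda B - \lambda^2)$ in terms of roots $z \in \Omega_1$ of $P$, via the change of variable $u = i z/\mu$; the claimed form in terms of roots of $f$ then follows with no new analytic content.

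First, I would verify that $\Phi(z) := i z/\mu$ is a bijection from $\Omega_1$ onto $(\mathbb{C} \setminus B_1(0)) \cap (\mathbb{R} \times (0,\infty))$. As the composition of scaling by $1/\mu > 0$ and rotation by $\pi/2$, $\Phi$ carries the open right half-plane $(0,\infty) \times \mathbb{R}$ onto the open upper half-plane $\mathbb{R} \times (0,\infty)$; since $|\Phi(z)| = |z|/\mu$, it carries $\mathbb{C} \setminus B_\mu(0)$ onto $\mathbb{C} \setminus B_1(0)$. The inverse is $u \mapsto -i \mu u$, which confirms bijectivity.

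Second, I would invoke the identity $P(z) = \mu^4 f(i z/\mu)$ established in the computation preceding the definition of $f$. This shows immediately that $z \in \Omega_1$ satisfies $P(z) = 0$ if and only if $u := \Phi(z)$ is a root of $f$ lying in $(\mathbb{C} \setminus B_1(0)) \cap (\mathbb{R} \times (0,\infty))$.

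Third, I would translate the formula for $\lambda$. Substituting $z = -i \mu u$ into $\lambda = -\frac{i}{2z}(z^2 - \mu^2)$ and using $z^2 = -\mu^2 u^2$ gives
\[
\lambda = -\frac{i}{-2 i \mu u}\bigl(-\mu^2 u^2 - \mu^2\bigr) = -\frac{\mu}{2 u}\bigl(1 + u^2\bigr) = -\frac{\mu}{2}\left(u + \frac{1}{u}\right).
\]
Combined with the preceding lemma, this yields the claim. There is no substantive obstacle: the argument is a routine conformal change of variable. The only bookkeeping to watch is that the two defining conditions of $\Omega_1$ (open right half-plane and exterior of $B_\mu(0)$) correspond exactly, under $\Phi$, to the two conditions on $u$ (open upper half-plane and exterior of $B_1(0)$), and that the signs in the transformed formula for $\lambda$ survive the rotation and rescaling correctly.
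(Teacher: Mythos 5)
Your argument is correct and is exactly the route the paper takes: it records the identity $P(z) = \mu^4 f(iz/\mu)$ and then states the lemma "as a consequence," leaving implicit precisely the bookkeeping you spell out (that $z \mapsto iz/\mu$ carries $\Omega_1$ bijectively onto $(\mathbb{C}\setminus B_1(0)) \cap (\mathbb{R}\times(0,\infty))$ and that the formula for $\lambda$ transforms as claimed). Your substitution $z = -i\mu u$ and the resulting expression $\lambda = -\frac{\mu}{2}(u + \frac{1}{u})$ check out.
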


We note that 
\begin{align*}
\left( U_{1}(0) \cap ({\mathbb{R}} \times (-\infty,0))
\rightarrow ({\mathbb{C}} \setminus B_{1}(0)) \cap ({\mathbb{R}} \times (0,\infty))
, 
u \mapsto \frac{1}{u}
\, \right)
\end{align*}
is biholomorphic. Further, for $w \in U_1(0) \cap ({\mathbb{R}} \times (-\infty,0))$,
it follows that 
\begin{align*}
f(w^{-1}) & = w^{-4} - \frac{ma - i (2n + 1)(M^2 - a^2)^{1/2}}{\mu M(2r_{+}+M)}
\, w^{-3} + \frac{2 M^2 \mu^2  - l(l+1)}{\mu^2M(2r_{+}+M)} \, w^{-2}  \\
& \quad \, \, \, + \frac{ma - i (2n + 1)(M^2 - a^2)^{1/2}}{\mu M(2r_{+}+M)} 
\, w^{-1} + \frac{M}{2r_{+}+M} \\
&= w^{-4} 
\left[ 
1 - \frac{ma - i (2n + 1)(M^2 - a^2)^{1/2}}{\mu M(2r_{+}+M)}
\, w + \frac{2 M^2 \mu^2  - l(l+1)}{\mu^2M(2r_{+}+M)} \, w^{2}  \right. \\
& \left. \qquad \quad \, \, \, + \frac{ma - i (2n + 1)(M^2 - a^2)^{1/2}}{\mu M(2r_{+}+M)} 
\, w^{3} + \frac{M}{2r_{+}+M} \, w^4
\right]  \, \, .
\end{align*}

As a consequence, we define the following.

\begin{definition}{(\bf Definition of $p, p_{e}$ and $\delta$)}
We define, 
\begin{align}  \label{definitionofp}
p(w) &:= w^4 + \frac{ma - i (2n + 1)(M^2 - a^2)^{1/2}}{\mu M^2} 
\, w^{3} \\
& \quad \, \, \, + \frac{2 M^2 \mu^2  - l(l+1)}{\mu^2 M^2} \, w^{2}  
- \frac{ma - i (2n + 1)(M^2 - a^2)^{1/2}}{\mu M^2}
\, w + \frac{2r_{+}+M}{M} 
\, \, ,  \nonumber \\
& \, \, = w^4 + \frac{\frac{ma}{M} - i (2n + 1)\sqrt{1 - \frac{a^2}{M^2}}}{\mu M} 
\, w (w^{2} - 1) \nonumber \\
& \quad \, \, \, + \left[ 2 - 
\frac{l(l+1)}{\mu^2 M^2} \right] w^{2}  
+ 3 + 2 \sqrt{1 - \frac{a^2}{M^2}} \, \, , \nonumber \\
& \, \, = w^4 + \frac{m - m \left(1 - \frac{a}{M}\right) - i (2n + 1)\sqrt{1 - \frac{a^2}{M^2}}}{\mu M} 
\, w (w^{2} - 1) \nonumber \\
& \quad \, \, \, + \left[ 2 - 
\frac{l(l+1)}{\mu^2 M^2} \right] w^{2}  
+ 3 + 2 \sqrt{1 - \frac{a^2}{M^2}} \, \, , \nonumber \\
& \, \, = w^4 + \frac{m}{\mu M} 
\, w (w^{2} - 1) \nonumber + \left[ 2 - 
\frac{l(l+1)}{\mu^2 M^2} \right] w^{2}  
+ 3 \\
& \quad \, \, \, \, - \frac{m \left(1 - \frac{a}{M}\right) + i (2n + 1)\sqrt{1 - \frac{a^2}{M^2}}}{\mu M} 
\, w (w^{2} - 1)  + 2 \sqrt{1 - \frac{a^2}{M^2}} \, \, , \nonumber \\
& = 
p_{e}(w) + \delta(w)
\, \, , \nonumber
\end{align}
for every $w \in {\mathbb{C}}$, where 
\begin{align} \label{defofpeanddelta}
p_{e}(w) & := w^4 + \frac{m}{\mu M} 
\, w (w^{2} - 1) + \left[ 2 - 
\frac{l(l+1)}{\mu^2 M^2} \right] w^{2}  
+ 3 \, \, , \\
\delta(w) & :=  - \frac{m \left(1 - \frac{a}{M}\right)}{\mu M} 
\, w (w^{2} - 1) + 2 \, \sqrt{1 - \frac{a^2}{M^2}} \left[1 -i \, \frac{2n+1}{2 \mu M} \, 
\, w (w^{2} - 1)\right] \, \, , \nonumber 
\end{align}
for every $w \in {\mathbb{C}}$.
\end{definition}

Hence, we arrive at the following 

\begin{figure} 
\centering
\includegraphics[width=9.52380952380952380952cm,height=4.57142857142857142857cm]{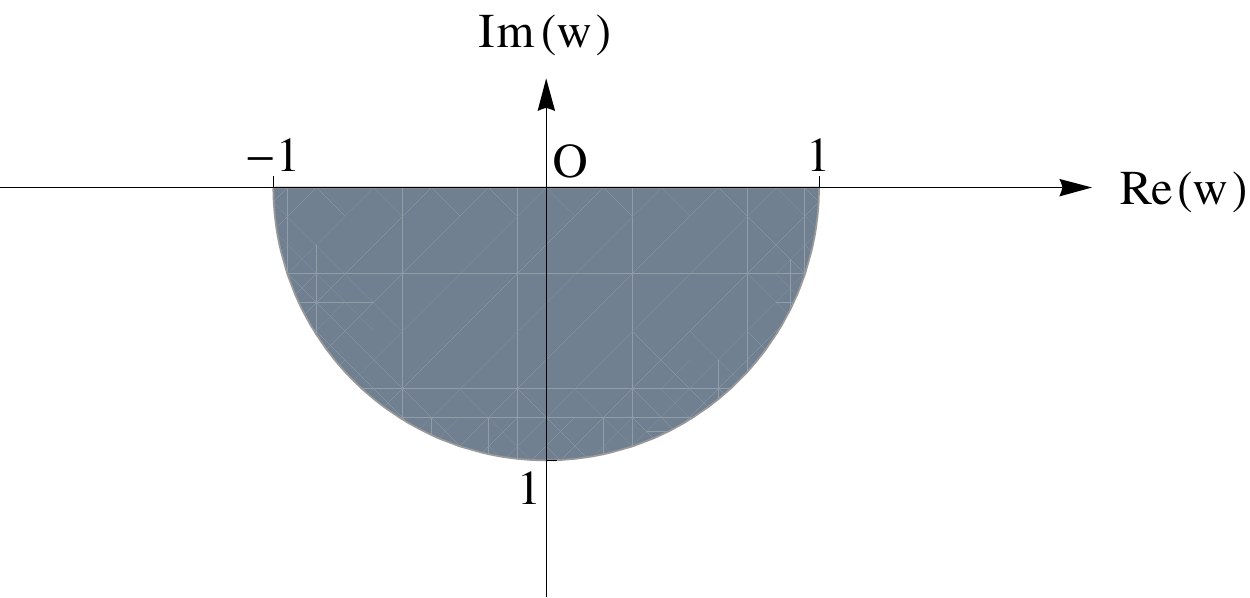}
\caption{The domain of values of $w$ inside the domain of 
$p$, leading on unstable $\lambda$ is given by the lower open half disk, shaded in gray.}.
\label{fig3}
\end{figure}

\begin{lemma}{(\bf Instability in terms of roots of $p$)}
If $R = r_{+}$, i.e., $R_{-} = 2 (M^2 - a^2)^{1/2}$, 
and $\lambda$ satisfies (\ref{exceptions1}),
then $\lambda \in {\mathbb{R}} \times (-\infty,0)$ is such 
that   
$\ker(A - \lambda B - \lambda^2)$ is non-trivial, if and only if  
\begin{equation*}
\lambda = - \frac{\mu}{2w} \, (1 + w^2) = - \frac{\mu}{2} \left(w + \frac{1}{w}
\right) \, \, , 
\end{equation*}
for a root $w$ of $p$ contained in 
\begin{equation} \label{Omega2}
\Omega_2 := U_{1}(0) \cap ({\mathbb{R}} \times (-\infty,0)) \, \, .
\end{equation}
\end{lemma}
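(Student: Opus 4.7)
The plan is to deduce this lemma from the preceding one (\textbf{Instability in terms of roots of $f$}) by the Möbius change of variable $u = 1/w$, which exchanges the set $({\mathbb{C}} \setminus B_{1}(0)) \cap ({\mathbb{R}} \times (0,\infty))$ with $\Omega_2 = U_{1}(0) \cap ({\mathbb{R}} \times (-\infty,0))$. First I would invoke the earlier lemma so that the task reduces to re-expressing ``roots of $f$ in the complement of the closed upper half-disk'' in terms of ``roots of $p$ in the open lower half-disk.''

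Next I would establish the polynomial identity
\begin{equation*}
p(w) \;=\; \frac{2r_{+}+M}{M}\, w^{4}\, f\!\left(\frac{1}{w}\right)
\end{equation*}
for every $w \in {\mathbb{C}}^{*}$, by the same coefficient matching already carried out in the excerpt (the computation of $f(w^{-1})$ displayed just before the definition of $p$): multiplying $w^{4}f(w^{-1})$ by $(2r_{+}+M)/M$ sends the constant term of $f$ to $1$, normalises the leading coefficient of $p$ to $1$, and flips the sign of the first-degree coefficient exactly as required by \eqref{definitionofp}. Since $(2r_{+}+M)/M \neq 0$ and $w \neq 0$ throughout $\Omega_2$, this identity gives a bijection between the nonzero roots of $f$ and the nonzero roots of $p$ via $u \leftrightarrow 1/u$.

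Then I would verify that the map $w \mapsto 1/w$ restricts to a biholomorphism
\begin{equation*}
\Omega_{2} \;\longrightarrow\; ({\mathbb{C}} \setminus B_{1}(0)) \cap ({\mathbb{R}} \times (0,\infty)),
\end{equation*}
as explicitly stated in the excerpt: inversion reverses the modulus (sending $|w|<1$ to $|u|>1$) and conjugates the argument (sending the open lower half-plane to the open upper half-plane). Hence $w$ is a root of $p$ in $\Omega_2$ if and only if $u = 1/w$ is a root of $f$ in $({\mathbb{C}} \setminus B_{1}(0)) \cap ({\mathbb{R}} \times (0,\infty))$.

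Finally I would transform the formula for $\lambda$. From the previous lemma, $\lambda = -\tfrac{\mu}{2}(u + 1/u)$; substituting $u = 1/w$ gives
\begin{equation*}
\lambda \;=\; -\,\frac{\mu}{2}\left( \frac{1}{w} + w\right) \;=\; -\,\frac{\mu}{2w}\,(1+w^{2}),
\end{equation*}
which is the claimed expression. Combining these three ingredients yields the equivalence stated in the lemma. There is no real obstacle here: the content is purely a change of variables, and the only thing to check carefully is the sign and the index of the middle coefficients in the identity $p(w) = \frac{2r_{+}+M}{M}\,w^{4}f(1/w)$, which is a straightforward algebraic verification.
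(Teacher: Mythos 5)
Your proposal is correct and follows essentially the same route as the paper: the paper itself derives this lemma by exactly the chain you describe, namely the preceding lemma on roots of $f$, the observation that $u \mapsto 1/u$ is a biholomorphism from $U_{1}(0) \cap ({\mathbb{R}} \times (-\infty,0))$ onto $({\mathbb{C}} \setminus B_{1}(0)) \cap ({\mathbb{R}} \times (0,\infty))$, and the displayed computation of $f(w^{-1})$ that yields $p(w) = \frac{2r_{+}+M}{M}\,w^{4}f(1/w)$ up to the stated normalisation. The identity, the set correspondence, and the substitution $u = 1/w$ in $\lambda = -\tfrac{\mu}{2}(u+1/u)$ all check out.
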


We note that 
\begin{equation*}
p_{e}(w) =  w^4 + {\alpha} w^3 + ({\beta}-4) w^2 - {\alpha} w + 3 \, \, , 
\end{equation*}
where 
\begin{equation*}
\alpha = \frac{m}{\mu M} \, \, , \, \, 
\beta = 6 - 
\frac{l(l+1)}{\mu^2 M^2} \, \, ,
\end{equation*}
are dimensionless.
Also, we note that, since 
\begin{equation*}
\frac{l(l+1)}{6 M^2} \geqslant \frac{l(l+1)}{2M (3M + \sqrt{M^2 - a^2} \, )} 
= \frac{l(l+1)}{2 M (r_{+} + 2M)}  
\, \, ,
\end{equation*}
the condition that 
\begin{equation*}
\mu^2 > \frac{l(l+1)}{6 M^2} \, \, ,
\end{equation*}
implies that 
\begin{equation*}
\mu^2 > \frac{l(l+1)}{2 M (r_{+} + 2M)} \, \, .
\end{equation*}
More generally, in the following, we define for $\alpha, \beta \in {\mathbb{R}}$ the 
polynomial $q$ by 
\begin{equation} \label{qalphabeta}
q_{\alpha,\beta}(w) :=  w^4 + {\alpha} w^3 + ({\beta}-4) w^2 - {\alpha} w + 3  \, \, , 
\end{equation}
for every $w \in {\mathbb{C}}$.\footnote{As a side remark, that it turns out that calling 
the coefficient  of $w^2$ in (\ref{qalphabeta}) ``$\beta - 4$", instead of ``$\beta$," is going to
simplify calculations in future, for some unknown reason.}
As a consequence, if 
\begin{equation*}
\alpha = \frac{m}{\mu M} \, \, , \, \, 
\beta = 6 - 
\frac{l(l+1)}{\mu^2 M^2} \, \, , 
\end{equation*}
then 
\begin{equation*}
p_{e} = q_{\alpha,\beta} \, \, .
\end{equation*}
In the following, we are going to apply the Cohn-Schur algorithm to find the number 
of roots of $q_{(\alpha,\beta)}$ in the open ball of radius $1$ around the origin,  $U_{1}(0)$, of the complex plane.
\begin{theorem}{(\bf Number of roots of $q_{\alpha,\beta}$ inside $U_{1}(0)$)}
\label{theorem1}
Let $\alpha, \beta \in {\mathbb{R}}$ be such that 
\begin{equation*}
 4 < \alpha^2 \, \, , \, \, 0 < \beta \, \, .
\end{equation*} 
Then $q_{\alpha,\beta}$ has $2$ roots inside $U_{1}(0)$, 
where multiple roots are counted with their multiplicity.
\end{theorem}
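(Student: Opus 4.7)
The plan is to sidestep the Schur--Cohn bookkeeping and instead combine a single Rouch\'e estimate with a continuity argument on the parameter region. The crux is a preliminary step that rules out any root of $q_{\alpha,\beta}$ on the unit circle throughout the entire hypothesis region $\{\alpha^{2}>4,\,\beta>0\}$; once that is in place, the number of zeros of $q_{\alpha,\beta}$ inside $U_{1}(0)$ is a locally constant, integer-valued function of $(\alpha,\beta)$, so it suffices to compute it at one convenient point.

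For the circle-free step I would introduce the reverse polynomial $q_{\alpha,\beta}^{*}(w):=w^{4}q_{\alpha,\beta}(1/w)=3w^{4}-\alpha w^{3}+(\beta-4)w^{2}+\alpha w+1$ and use the identity
\begin{equation*}
q_{\alpha,\beta}(w)-q_{\alpha,\beta}^{*}(w)=-2(w^{2}-1)(w^{2}-\alpha w+1),
\end{equation*}
obtained by direct expansion. Because the coefficients are real, on $|w|=1$ one has $q_{\alpha,\beta}^{*}(w)=w^{4}\overline{q_{\alpha,\beta}(w)}$, so any unit-circle zero of $q_{\alpha,\beta}$ is automatically a zero of $q_{\alpha,\beta}^{*}$ and hence of $(w^{2}-1)(w^{2}-\alpha w+1)$. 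The option $w=\pm1$ is excluded by $q_{\alpha,\beta}(\pm1)=\beta>0$; the roots of $w^{2}-\alpha w+1$ are, under $\alpha^{2}>4$, real, distinct and mutually reciprocal, so one of them has modulus strictly less than $1$ and the other strictly greater, ruling out the unit circle.

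For the counting step I would anchor the argument at large $\beta$. On $|w|=1$,
\begin{equation*}
|q_{\alpha,\beta}(w)-\beta w^{2}|\le 1+|\alpha|+4+|\alpha|+3 = 8+2|\alpha|,
\end{equation*}
so for $\beta>8+2|\alpha|$ Rouch\'e's theorem applied with $g(w)=\beta w^{2}$ gives exactly $2$ zeros of $q_{\alpha,\beta}$ in $U_{1}(0)$, counted with multiplicity. To propagate this count across the full hypothesis region I would exploit the identity $q_{\alpha,\beta}(w)=q_{-\alpha,\beta}(-w)$, which preserves $|w|$ and reduces the problem to the connected component $\{\alpha>2,\,\beta>0\}$; on that connected set the zero-count is locally constant by the previous step and equals $2$ for $\beta$ sufficiently large by Rouch\'e, hence equals $2$ throughout. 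The step requiring the most care is the first one: the entire argument collapses if a root can migrate onto the unit circle as the parameters vary, so the factorization identity and the reciprocal-pair analysis of $w^{2}-\alpha w+1$ under $\alpha^{2}>4$ are doing the real work.
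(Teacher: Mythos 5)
Your proof is correct, and it takes a genuinely different route from the paper, which runs the Schur--Cohn algorithm (Theorem~6.8c of Henrici): it computes the iterated Schur transforms $T^{k}q_{\alpha,\beta}$, reads off the signs of $\gamma_1,\dots,\gamma_4$ under the hypotheses $\alpha^2>4$, $\beta>0$, and extracts the count $2$ from the resulting index sequence. Your argument replaces that symbolic computation by a deformation argument: the identity
\begin{equation*}
q_{\alpha,\beta}(w)-q_{\alpha,\beta}^{*}(w)=-2\left(w^{2}-1\right)\left(w^{2}-\alpha w+1\right)
\end{equation*}
checks out ($q-q^{*}=-2w^{4}+2\alpha w^{3}-2\alpha w+2$, and the right-hand side expands to the same thing), the relation $q^{*}(w)=w^{4}\overline{q(w)}$ on $|w|=1$ is valid because the coefficients are real, and $q_{\alpha,\beta}(\pm1)=\beta>0$ together with the reciprocal-pair analysis of $w^{2}-\alpha w+1$ for $\alpha^{2}>4$ does exclude unit-circle roots on the whole parameter region; the Rouch\'e comparison with $\beta w^{2}$ for $\beta>8+2|\alpha|$ and the symmetry $q_{\alpha,\beta}(w)=q_{-\alpha,\beta}(-w)$ then pin the locally constant count to $2$ on each connected component. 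What your approach buys is conceptual transparency and, as a free by-product, the fact that $q_{\alpha,\beta}$ has no roots on $S^{1}$ throughout the region $\alpha^2>4$, $\beta>0$ --- a fact the paper only obtains later (in Theorem~\ref{theorem3}, under more restrictive bounds on $\alpha$ and $\beta$, and by a roundabout combination of the discriminant computation and the disk count). What the paper's Schur--Cohn route buys is that it is purely algorithmic and does not depend on finding a clean factorization of $q-q^{*}$, so it transfers to perturbed versions of the polynomial where no such identity is available; it also packages the absence of circle roots into the nonvanishing of the $\gamma_k$ rather than requiring a separate lemma.
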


\begin{proof}
In the following, we are going to apply Theorem~6.8c of \cite{henrici}.
Here $T^{k}q_{\alpha,\beta}$, $k=1,\dots,4$, denote the iterated Schur transforms of $q_{\alpha,\beta}$, 
a $*$ indicates a reciprocal polynomial and $\gamma_k := (T^{k}q_{\alpha,\beta})(0)$, 
for $k=1,\dots,4$. It follows for every $w \in {\mathbb{C}}$
that
\begin{align*}
q_{\alpha,\beta}(w) & = w^4 + {\alpha} w^3 + ({\beta}-4) w^2 - {\alpha} w + 3 \, \, , \\
q_{\alpha,\beta}^{*}(w) & = 3 w^4 - {\alpha} w^3 + ({\beta}-4) w^2 + {\alpha} w + 1 \, \, , \\
(Tq_{\alpha,\beta})(w) & = 3 q_{\alpha,\beta}(w) - q_{\alpha,\beta}^{*}(w) \\
& = 3 w^4 + 3 {\alpha} w^3 + 3({\beta}-4) w^2 - 3a w + 9 \\
& \quad \, -
[3 w^4 - {\alpha} w^3 + ({\beta}-4) w^2 + {\alpha} w + 1] \\
& = 4 {\alpha} w^3 + 2({\beta}-4) w^2 - 4 {\alpha} w + 8 \\
(Tq_{\alpha,\beta})^{*}(w) & = 8 w^3 - 4 {\alpha} w^2 + 2({\beta}-4) w + 4 \alpha \, \, \\
(T^2q_{\alpha,\beta})(w) & = 8 (Tq_{\alpha,\beta})(w) - 4 \alpha (Tq_{\alpha,\beta})^{*}(w) \\
& = 8 [4 {\alpha} w^3 + 2({\beta}-4) w^2 - 4 {\alpha} w + 8] \\
& \quad - 4 {\alpha} [8 w^3 - 4 {\alpha} w^2 + 2({\beta}-4) w + 4 \alpha] \\
& = 16({\alpha}^2 + {\beta} - 4) w^2 - 8 {\alpha} [4 + ({\beta}-4)] w + 16 (4 - {\alpha}^2) \\
& = 16({\alpha}^2 + {\beta} - 4) w^2 - 8 {\alpha} {\beta} w + 16 (4 - {\alpha}^2) \, \, , \\ 
(T^2q_{\alpha,\beta})^{*}(w) & = 16 (4 - {\alpha}^2) w^2 - 8 {\alpha} {\beta} w + 16({\alpha}^2 + {\beta} - 4) \, \, , \\
(T^3q_{\alpha,\beta})(w) & = 16 (4 - {\alpha}^2) [16({\alpha}^2 + {\beta} - 4) w^2 - 8 {\alpha} {\beta} w + 16 (4 - {\alpha}^2)] \\
& \quad \, 
- 16({\alpha}^2 + {\beta} - 4) [16 (4 - {\alpha}^2) w^2 - 8 {\alpha} {\beta} w + 16({\alpha}^2 + {\beta} - 4)] \\
& = 128 {\alpha} {\beta} [2({\alpha}^2-4) + {\beta}] w + 256 [(4 -{\alpha}^2)^2 - ({\alpha}^2 - 4 + {\beta})^2] \\
& = 128 {\alpha} {\beta} [2({\alpha}^2-4) + {\beta}] w + 256 {\beta} [2(4 - {\alpha}^2) - {\beta}] \\
& = 128 {\beta} [2({\alpha}^2 - 4) + {\beta}] (\alpha w  - 2) \, \, , \\
(T^3q_{\alpha,\beta})^{*}(w) & = 128 {\beta} [2({\alpha}^2 - 4) + {\beta}] (-2 w  + \alpha) \, \, , \\
(T^4q_{\alpha,\beta})(w) & = - 256 {\beta} [2({\alpha}^2 - 4) + {\beta}] \cdot 128 {\beta} [2({\alpha}^2 - 4) + {\beta}] (\alpha w  - 2) \\
&
\quad \, - 128 {\beta} {\alpha} [2({\alpha}^2 - 4) + {\beta}] \cdot 128 {\beta} [2({\alpha}^2 - 4) + {\beta}] (-2 w  + \alpha) \\
& = - 128 \cdot 256 {\beta}^2 [2({\alpha}^2 - 4) + {\beta}]^2 (\alpha w  - 2) \\
&
\quad \, - 128^2 {\beta}^2 \alpha [2({\alpha}^2 - 4) + {\beta}]^2 (-2 w  + \alpha) \\
& = 256^2 {\beta}^2 [2({\alpha}^2 - 4) + {\beta}]^2 - 128^2 {\beta}^2 \alpha^2 [2({\alpha}^2 - 4) + {\beta}]^2 \\
& = 128^2 (4 - \alpha^2) {\beta}^2 [2({\alpha}^2 - 4) + {\beta}]^2 \, \, , \\
\gamma_1 & = 8 \, \, , \\
\gamma_2 & = - 16 ({\alpha}^2 -4)  \, \, , \\
\gamma_3 & =  - 2 \cdot 128  {\beta} [2({\alpha}^2 - 4) + {\beta}] \, \, , \\
\gamma_4 & = - 128^2 ({\alpha}^2 - 4)
{\beta}^2 [2({\alpha}^2 - 4) + {\beta}]^2 \, \, .
\end{align*}
As a consequence, we conclude that the conditions
\begin{equation*}
{\alpha}^2 > 4 \, \, \wedge \, \, {\beta} > 0
\end{equation*}
imply that 
\begin{equation*}
\gamma_1 > 0 \, \, , \, \, \gamma_2 < 0 \, \, , \, \, \gamma_3 < 0 \, \, , \, \, \gamma_4 < 0
\end{equation*}
and hence that the corresponding indices $k_1,k_2,k_3$ are given by 
\begin{equation*}
k_1 = 2 \, \, , \, \, k_2 = 3 \, \, , \, \,
k_3 = 4 \, \, .
\end{equation*}
Therefore, according to Theorem~6.8c of \cite{henrici}, {\it the number of roots of $q_{\alpha,\beta}$ in $U_{1}(0)$, multiple roots counted with their multiplicity, is given by} 
\begin{align*}
\sum_{j=1}^{3} (-1)^{j-1}(4 + 1 - k_{j}) & =
5 - k_1 - (5 - k_2) + 5 - k_3 =  5 - 2 - (5 - 3) + 5 - 4 \\
& =  3 - 2 + 1 = 2 \, \, .
\end{align*}
\end{proof}
In the next step, we calculate the discriminant of the polynomial $q_{\alpha,\beta}$, to 
obtain information on the multiplicities of the roots of $q_{\alpha,\beta}$.
\begin{theorem}{(\bf Calculation of the discriminant of $q_{\alpha,\beta}$)}
\label{theorem2}
Let $\alpha, \beta \in {\mathbb{R}}$ be such that 
\begin{equation*}
4 < \alpha^2 < 6 \, \, , \, \, \frac{4}{100} < \beta < \frac{165}{100}  
\, \, .
\end{equation*}
Then $q_{\alpha,\beta}$ 
\begin{itemize}
\item[(i)] has $4$ pairwise different roots, 
\item[(ii)] $2$ of these roots are real,
\item[(iii)] and $2$ of these roots are non-real and conjugate 
complex.
\end{itemize}
\end{theorem}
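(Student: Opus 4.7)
Since $q_{\alpha,\beta}$ is a monic quartic with real coefficients, its non-real roots come in complex conjugate pairs, and its root configuration must be one of: four real roots; two real roots together with one complex conjugate pair of non-real roots; or two complex conjugate pairs of non-real roots. The sign of the discriminant $\Delta$ of a monic real quartic distinguishes these three possibilities: $\Delta > 0$ corresponds to the first or third configuration with all roots simple, $\Delta < 0$ to the second with all roots simple, and $\Delta = 0$ to the presence of a multiple root. Thus all three of (i)--(iii) will follow from the single inequality
\begin{equation*}
\Delta(\alpha,\beta) \, < \, 0
\end{equation*}
on $R := \{(\alpha,\beta) : 4 < \alpha^2 < 6,\ 4/100 < \beta < 165/100\}$.

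The first step is to compute $\Delta(\alpha,\beta)$ by applying the classical discriminant formula for $x^4 + b x^3 + c x^2 + d x + e$ with $b = \alpha$, $c = \beta - 4$, $d = -\alpha$, $e = 3$. The change of variable $w \mapsto -w$ sends $q_{\alpha,\beta}$ to $q_{-\alpha,\beta}$, so $\Delta$ is invariant under $\alpha \to -\alpha$; it is therefore a polynomial $\tilde\Delta(A,\beta)$ in $A := \alpha^2$ and $\beta$ alone.

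The second step is the sign analysis of $\tilde\Delta$ on the closure of $R$. Because $\tilde\Delta$ is a polynomial in two real variables on a compact rectangle, it suffices to bound its maximum: one finds the interior critical points by solving $\partial_A \tilde\Delta = \partial_\beta \tilde\Delta = 0$ (both polynomial, tractable by resultants or elimination) and separately examines $\tilde\Delta$ on each of the four edges, where it becomes a univariate polynomial whose sign can be decided by Sturm's theorem or by an explicit factorization. Once a strictly negative uniform upper bound on $R$ is in hand, conclusions (i)--(iii) follow from the sign criterion recalled above.

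The main obstacle is precisely this sign verification. The explicit formula for $\tilde\Delta$ contains roughly a dozen monomials of competing sign, and the numerical endpoints $\beta = 4/100$ and $\beta = 165/100$ appear to be chosen as approximate zeros of $\tilde\Delta$ at the extremal values of $\alpha^2$, so the inequality holds with only a slim margin. Consequently, crude estimates that drop ``small'' terms will not suffice, and a careful case analysis of the boundary and of the critical set of $\tilde\Delta$---plausibly aided by symbolic computation---is needed.
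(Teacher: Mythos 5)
Your reduction is sound: for a monic quartic with real coefficients, $\Delta<0$ is equivalent to having exactly two simple real roots and one simple conjugate pair, so (i)--(iii) do all follow from $\Delta<0$ on the parameter region, and this is exactly the reduction the paper uses. The problem is that you never actually establish $\Delta<0$. You correctly identify the sign verification as ``the main obstacle'' and then leave it as a plan --- find interior critical points by elimination, analyze the four edges by Sturm's theorem, ``plausibly aided by symbolic computation.'' That is a proof outline, not a proof; the entire quantitative content of the theorem is in this step, and it is missing.

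Moreover, your assessment of the difficulty points you away from the argument that works. You assert that the margin is slim and that ``crude estimates that drop small terms will not suffice,'' but the paper's proof is precisely a crude term-by-term estimate: writing
\begin{equation*}
\triangle = 4(\alpha^2-4)^2(12+\alpha^2) - 16(192+60\alpha^2+5\alpha^4)\,\beta + (3456+432\alpha^2+\alpha^4)\,\beta^2 - 16(48+\alpha^2)\,\beta^3 + 48\beta^4
\end{equation*}
and bounding each coefficient separately using $4<\alpha^2<6$ (upper bounds on the positive-coefficient terms, lower bounds inside the negative ones) gives $\triangle < h(\beta) := 4\,(72 - 2048\beta + 1521\beta^2 - 208\beta^3 + 12\beta^4)$, a single-variable quartic independent of $\alpha$. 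Since $h''(x) = (24x-104)^2 + 1352 > 0$, $h$ is convex, and $h(4/100)<0$, $h(165/100)<0$, so $h<0$ on the whole interval --- no critical-point analysis, edge decomposition, or Sturm sequences are needed. (The numerical values of $h$ at the endpoints are roughly $-30$ and $-47$, so the margin after the crude bounding is not especially slim either.) To complete your proof you would need to either carry out your two-variable optimization in full or, more economically, adopt this one-variable majorization.
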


\begin{proof}
From direct calculation, it follows that the discriminant $\triangle$ of $q_{\alpha,\beta}$ is
given by 
\begin{align*}
\triangle & = 4 {\alpha}^6+{\alpha}^4 {\beta}^2-80 {\alpha}^4 {\beta}+16 {\alpha}^4-16 {\alpha}^2 {\beta}^3+432 {\alpha}^2 {\beta}^2-960 {\alpha}^2 {\beta}-320 {\alpha}^2 \\
& \quad \, +48 {\beta}^4-768 {\beta}^3+3456 {\beta}^2-3072 {\beta}+768 \, \, \\
& = 768 - 320 {\alpha}^2 + 16 {\alpha}^4 + 
 4 {\alpha}^6 + (-3072 - 960 {\alpha}^2 - 80 {\alpha}^4) {\beta} \\
& \quad \, + (3456 + 432 {\alpha}^2 + 
    {\alpha}^4) {\beta}^2 + (-768 - 16 {\alpha}^2) {\beta}^3 + 48 {\beta}^4 \\
& = 4 ({\alpha}^2-4)^2 (12 + {\alpha}^2) -16 (192 + 60 {\alpha}^2 + 5 {\alpha}^4) {\beta} + (3456 + 432 {\alpha}^2 + {\alpha}^4) {\beta}^2 \\
& \quad \, -16 (48 + {\alpha}^2) {\beta}^3 + 48 {\beta}^4 \, \, .
\end{align*}
Further, with help of the assumed estimates on $\alpha,\beta$, 
it follows that.
\begin{align*}
\triangle & < 288 - 8192 {\beta} + 6084 {\beta}^2 - 832 {\beta}^3 + 48 {\beta}^4 \\
& = 
4 (72 - 2048 {\beta} + 1521 {\beta}^2 - 208 {\beta}^3 + 12 {\beta}^4) = h({\beta}) \, \, ,
\end{align*}
where $h : {\mathbb{R}} \rightarrow {\mathbb{R}}$ is defined by
\begin{equation*}
h(x) := 4 (72 - 2048 x + 1521 x^2 - 208 x^3 + 12 x^4) \, \, , 
\end{equation*}
for every $x \in {\mathbb{R}}$. We note that, 
\begin{align*}
h^{\prime \prime}(x) & = 12168 - 4992 x + 576 x^2 =  24 (507 - 208 x + 24 x) \\
& = 24 \left[ 
\left(\sqrt{24} \, x - \frac{104}{\sqrt{24}}\right)^2 + \frac{169}{3}
\right] = (24 x - 104)^2 + 1352 > 0 \, \, .
\end{align*}
Hence, $h$ is convex. In addition, 
\begin{equation*}
h\left(\frac{4}{100}\right) < 0 \, \, , \, \, h\left(\frac{165}{100}\right) < 0 
\end{equation*}
and hence 
\begin{equation*}
h(x) < 0 \, \, ,
\end{equation*}
for every 
\begin{equation*}
x \in \left(\frac{4}{100},\frac{165}{100}\right) \, \, .
\end{equation*}
As a consequence, $g$ has $4$ pairwise different roots, 
$2$ of these roots are real, and $2$ of these roots are non-real and conjugate 
complex.
\end{proof} 
In the next step, we find real roots of $q_{\alpha,\beta}$, with the help of the intermediate value
theorem.

\begin{lemma}{(\bf Real roots of $q_{\alpha,\beta}$)} \label{realzeros}
Let $\alpha, \beta \in {\mathbb{R}}$ such that 
\begin{equation*}
4 < \alpha^2 < 6 \, \, , \, \, 0 < \beta < 2
\end{equation*}
and $q_{\alpha,\beta}$ be defined by 
\begin{align*}
q_{\alpha,\beta}(w)  := & \, w^4 + {\alpha} w^3 + ({\beta}-4) w^2 - {\alpha} w + 3  \, \, , \\
= & \,  w^4 + {\alpha} w (w^2 - 1) + ({\beta}-4) w^2 + 3 \, \, ,
\end{align*}
for every $w \in {\mathbb{C}}$.
Then, 
\begin{itemize}
\item[(i)]
if $\alpha > 0$, then $q_{\alpha,\beta}$ has a root in $(-2,-1)$,
\item[(ii)]
if $\alpha < 0$, then $q_{\alpha,\beta}$ has a root in $(1,2)$.
\end{itemize}
\end{lemma}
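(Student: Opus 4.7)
The plan is to apply the intermediate value theorem directly to the real-variable restriction of $q_{\alpha,\beta}$, after evaluating it at the integer endpoints $\pm 1, \pm 2$. The key observation is that $q_{\alpha,\beta}(w) = w^4 + \alpha w(w^2-1) + (\beta-4)w^2 + 3$, so $w = \pm 1$ makes the $\alpha$-dependent middle term vanish, giving a clean value; meanwhile at $w = \pm 2$ the term $\alpha w(w^2-1) = \pm 6\alpha$ dominates thanks to $\alpha^2 > 4$.

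First I would compute the four values
\begin{align*}
q_{\alpha,\beta}(-1) &= 1 - (\beta-4) - \alpha\cdot(-1)\cdot 0 + 3 = \beta, \\
q_{\alpha,\beta}(\phantom{-}1) &= \beta, \\
q_{\alpha,\beta}(-2) &= 16 - 6\alpha \cdot (-1)\cdot\text{sign correction} \ldots
\end{align*}
— more carefully, one obtains $q_{\alpha,\beta}(-2) = 3 - 6\alpha + 4\beta$ and $q_{\alpha,\beta}(2) = 3 + 6\alpha + 4\beta$. The assumption $\beta > 0$ immediately gives $q_{\alpha,\beta}(\pm 1) > 0$.

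Next I would use the hypothesis $4 < \alpha^2 < 6$ together with $0 < \beta < 2$ to sign the endpoints at $\pm 2$. If $\alpha > 0$, then $\alpha > 2$, so $6\alpha > 12$ while $4\beta < 8$, hence
\[
q_{\alpha,\beta}(-2) = 3 - 6\alpha + 4\beta < 3 - 12 + 8 = -1 < 0,
\]
so by continuity $q_{\alpha,\beta}$ has a real root in $(-2,-1)$. If $\alpha < 0$, then $\alpha < -2$, so by the symmetric computation $q_{\alpha,\beta}(2) = 3 + 6\alpha + 4\beta < -1 < 0$, producing a real root in $(1,2)$.

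There is no real obstacle here: the calculation is purely arithmetic, and the only thing to watch is that the hypotheses $\alpha^2 > 4$ and $\beta < 2$ are both needed (with a bit of room) to force the sign change at the outer endpoints, while $\beta > 0$ is needed at the inner endpoints. The full proof is then just the assembly of these sign computations and one application of the intermediate value theorem in each case.
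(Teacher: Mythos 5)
Your proposal is correct and follows essentially the same route as the paper: both note $q_{\alpha,\beta}(\pm 1)=\beta>0$, establish $q_{\alpha,\beta}(-2)<0$ (resp.\ $q_{\alpha,\beta}(2)<0$) from $\alpha>2$ (resp.\ $\alpha<-2$) together with $\beta<2$, and conclude by the intermediate value theorem. The only cosmetic difference is that you work with the exact value $q_{\alpha,\beta}(\mp 2)=3\mp 6\alpha+4\beta$ while the paper first bounds $\beta w^2\leqslant 2w^2$ and then evaluates; the estimates are equivalent.
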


\begin{proof}
First, we note that
\begin{equation*}
q_{\alpha,\beta}(-1) = q_{\alpha,\beta}(1) = \beta  > 0  
\end{equation*}
and that
\begin{align*}
q_{\alpha,\beta}(t) & = w^4 + {\alpha} w (w^2 - 1) + ({\beta}-4) w^2 + 3 \\
& = w^4 - 4 w^2 + 3 + {\alpha} w (w^2  - 1) + {\beta} w^2 \\
& \leqslant w^4 - 4 w^2 + 3 + {\alpha} w (w^2 - 1) + 2 w^2 \\
& = w^4 - 2 w^2 + 3 + {\alpha} w (w^2 - 1) \, \, ,
\end{align*}
for $w \in {\mathbb{R}}$. 
Hence,  if $\alpha > 0$, then $\alpha > 2$ and
\begin{equation*}
q_{\alpha,\beta}(-2) = 16  - 8 + 3  + \alpha \, (-2)(4 - 1)  = 
11 + 6 (-\alpha) < 11 - 12 = -1 < 0 
\, \, .
\end{equation*}
As a consequence, $q_{\alpha,\beta}$ has a root in $(-2,-1)$.
Further, if $\alpha < 0$, then $\alpha < -2$ and
\begin{equation*}
q_{\alpha,\beta}(2) = 16  - 8 + 3  +  \alpha \cdot 2(4 - 1) =  
11 + 6 \alpha < 11 - 12 = -1 < 0 \, \, .
\end{equation*}
As consequence, $q_{\alpha,\beta}$ has a root in $(1,2)$. 
\end{proof}
Summarizing the obtained information on the roots of $q_{\alpha,\beta}$, 
we obtain:
\begin{theorem}
\label{theorem3}{({\bf Roots of $q_{\alpha,\beta}$})}
Let $\alpha, \beta \in {\mathbb{R}}$ such that 
\begin{equation*}
4 < \alpha^2 < 6 \, \, , \, \, \frac{4}{100} < \beta < \frac{165}{100}  
\, \, .
\end{equation*}
Then $q_{\alpha,\beta}$
\begin{itemize}
\item[(i)]  
has precisely $1$ simple root in 
$U_{1}(0) \cap ({\mathbb{R}} \times (-\infty,0))$,
\item[(ii)] 
$1$ simple root in 
$U_{1}(0) \cap ({\mathbb{R}} \times (0,\infty))$,
\item[(iii)] 
and $2$ different simple roots on ${\mathbb{R}} \setminus 
[-1,1]$.
\end{itemize}
We note that this implies that $q_{\alpha,\beta}$ has no roots on $S^1 \cup [-1,1]$.
\end{theorem}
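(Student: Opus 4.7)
The plan is to combine Theorem~\ref{theorem1}, Theorem~\ref{theorem2} and Lemma~\ref{realzeros}, all of whose hypotheses are implied by the assumptions $4 < \alpha^2 < 6$ and $\frac{4}{100} < \beta < \frac{165}{100}$ (the first two match exactly, while Lemma~\ref{realzeros} only needs $0 < \beta < 2$, which is implied by $\beta < \frac{165}{100}$), and to pin down each of the four roots of $q_{\alpha,\beta}$ individually.

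By Theorem~\ref{theorem2}, $q_{\alpha,\beta}$ has four pairwise distinct roots, namely two real roots $r_1,r_2$ and a pair of non-real complex-conjugate roots $z,\bar z$. By Theorem~\ref{theorem1}, exactly two of these, counted with multiplicity, lie in the open disk $U_{1}(0)$; because the roots are simple, this means two of the four (as a set) lie strictly inside $U_{1}(0)$. A direct substitution gives $q_{\alpha,\beta}(\pm 1) = \beta > 0$, so $r_1,r_2 \neq \pm 1$. Since $|z|=|\bar z|$, there are three mutually exclusive possibilities: $|z|<1$, $|z|=1$, or $|z|>1$.

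The next step is to rule out the last two possibilities via Lemma~\ref{realzeros}. In each of them, both roots strictly inside $U_{1}(0)$ must be real (in the case $|z|=1$ the non-real pair sits on $S^1$, not inside the open disk), forcing $|r_1|,|r_2|<1$; but Lemma~\ref{realzeros} produces a real root in $(-2,-1)$ when $\alpha>0$ and a real root in $(1,2)$ when $\alpha<0$, contradicting $|r_1|,|r_2|<1$ in either case. Hence $|z|<1$, the conjugate pair $\{z,\bar z\}$ lies strictly inside $U_{1}(0)$, and the real roots satisfy $|r_1|,|r_2| \geq 1$; combined with $r_1,r_2\neq\pm 1$, this upgrades to $|r_1|,|r_2|>1$, i.e.\ $r_1,r_2\in\mathbb{R}\setminus[-1,1]$.

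Parts (i) and (ii) then follow, since exactly one of $z,\bar z$ has positive imaginary part and the other has negative imaginary part, and both are simple by Theorem~\ref{theorem2}. Part (iii) follows because $r_1 \neq r_2$ (again by Theorem~\ref{theorem2}) and both lie in $\mathbb{R}\setminus[-1,1]$. The concluding remark, that no root lies on $S^1\cup[-1,1]$, is automatic: the non-real roots are neither real nor of modulus $1$, and the real roots have modulus strictly greater than $1$. I do not anticipate any real obstacle; the whole argument is a clean case-split driven by the equal-modulus property of the conjugate pair and the root-count from Theorem~\ref{theorem1}, with Lemma~\ref{realzeros} designed precisely to close off the remaining wrong cases.
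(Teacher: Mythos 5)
Your proposal is correct and follows essentially the same route as the paper: it combines the Schur--Cohn root count (Theorem~\ref{theorem1}), the discriminant analysis (Theorem~\ref{theorem2}) and Lemma~\ref{realzeros}, and uses the equal modulus of the conjugate pair to force that pair into $U_1(0)$ and the two real roots into ${\mathbb{R}} \setminus [-1,1]$. Your explicit three-way case split on $|z|$ is merely a cleaner packaging of the paper's two nested contradiction arguments; no substantive difference.
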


\begin{proof}
According to Theorem~\ref{theorem1}, {\it the number of roots of $q_{\alpha,\beta}$ in $U_{1}(0)$, multiple roots counted with their multiplicity, is given by
$2$.}  Further,
according Theorem~\ref{theorem2}, 
$q_{\alpha,\beta}$ 
\begin{itemize}
\item[(i)] has $4$ pairwise different roots, 
\item[(ii)] $2$ of these roots are real,
\item[(iii)] and $2$ of these roots are non-real and conjugate 
complex.
\end{itemize} 
As a consequence, $q_{\alpha,\beta}$ has precisely $2$ different roots in 
$U_{1}(0)$.
Also, according to Lemma~\ref{realzeros}, $q_{\alpha,\beta}$
has $1$ real root in ${\mathbb{R}} \setminus [-1,1]$. 
From the assumption that 
$q_{\alpha,\beta}$ has $2$ real roots in $U_{1}(0)$, it follows that 
these roots are different and hence that $q_{\alpha,\beta}$ has $3$ pairwise different
real roots.$\lightning$ Hence there is a non-real root in $U_{1}(0)$. The assumption that there is no root in $U_{1}(0) \cap \left({\mathbb{R}} \times (-\infty,0)\right)$ leads to the existence of $1$ root in  $U_{1}(0) \cap \left({\mathbb{R}} \times (0,\infty)\right)$ 
and hence, since $q_{\alpha,\beta}$ has real coefficients,
to the existence of a root in $U_{1}(0) \cap \left({\mathbb{R}} \times (-\infty,0)\right)$.$\lightning$
Hence, there is a root in $U_{1}(0) \cap \left({\mathbb{R}} \times (-\infty,0)\right)$, and there is also a root in  $U_{1}(0) \cap \left({\mathbb{R}} \times (0,\infty)\right)$. As a consequence, the $2$ real roots are 
contained in ${\mathbb{R}} \setminus (-1,1)$. Since, $q_{\alpha,\beta}(-1) = q_{\alpha,\beta}(1) = \beta > 0$, the 
$2$ real roots are contained in ${\mathbb{R}} \setminus [-1,1]$. We note that this implies that 
there are no roots on $S^1 \cup [-1,1]$. 
\end{proof}

In the final step, we apply Rouch\'e's theorem, to prove the existence
of roots of $p$ in $\Omega_2$, of for $a$ sufficiently close to $M$.

\begin{theorem}{(\bf Roots of $p$)} \label{maintheorem} 
Let 
\begin{equation*}
\alpha = \frac{m}{\mu M} \, \, , \, \, 
\beta = 6 - 
\frac{l(l+1)}{\mu^2 M^2} \, \, ,
\end{equation*}
be such that 
\begin{equation} \label{inequalities}
4 < \alpha^2 < 6 \, \, , \, \, \frac{4}{100} < \beta < \frac{165}{100}  
\, \, .
\end{equation}
Then, for $a$ sufficiently close to $M$, there is 
a root of $p= p_{e} + \delta$ in
\begin{equation*}
U_1(0) \cap ({\mathbb{R}} \times (-\infty,0)) \, \, .
\end{equation*}
\end{theorem}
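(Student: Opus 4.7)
The plan is to apply Rouché's theorem on a small disk around the unique simple root of the unperturbed polynomial $p_e$ in $U_1(0) \cap (\mathbb{R} \times (-\infty,0))$ that is supplied by Theorem~\ref{theorem3}. Since the inequalities (\ref{inequalities}) coincide with the hypotheses of Theorem~\ref{theorem3}, that theorem produces a simple root $w_0 \in U_1(0) \cap (\mathbb{R} \times (-\infty,0))$ of $p_e = q_{\alpha,\beta}$, and guarantees in addition that $q_{\alpha,\beta}$ has no zeros on $S^1 \cup [-1,1]$. In particular, $w_0$ lies in the open set $U_1(0) \cap (\mathbb{R} \times (-\infty,0))$ and is an isolated zero of $p_e$.

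First I would fix $\varepsilon > 0$ small enough that the closed disk $\overline{B_\varepsilon(w_0)}$ is contained in $U_1(0) \cap (\mathbb{R} \times (-\infty,0))$ and contains no zero of $p_e$ other than $w_0$; this is possible because $w_0$ is an isolated root of a polynomial lying in the open set $U_1(0) \cap (\mathbb{R} \times (-\infty,0))$. On the boundary circle $\partial B_\varepsilon(w_0)$, the continuous function $|p_e|$ attains a positive minimum
\begin{equation*}
c_\varepsilon := \min_{w \in \partial B_\varepsilon(w_0)} |p_e(w)| > 0 .
\end{equation*}

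Next I would bound $|\delta(w)|$ uniformly on $\overline{B_\varepsilon(w_0)}$. From the explicit form
\begin{equation*}
\delta(w) = - \frac{m\left(1 - \tfrac{a}{M}\right)}{\mu M} \, w(w^2 - 1) + 2\sqrt{1 - \tfrac{a^2}{M^2}} \left[ 1 - i \, \frac{2n+1}{2\mu M} \, w(w^2 - 1) \right] ,
\end{equation*}
and the fact that $w \mapsto w(w^2-1)$ is bounded on the compact set $\overline{B_\varepsilon(w_0)}$ by some $K_\varepsilon$, I obtain
\begin{equation*}
|\delta(w)| \leqslant \frac{|m| \left(1 - \tfrac{a}{M}\right)}{\mu M} K_\varepsilon + 2\sqrt{1 - \tfrac{a^2}{M^2}} \left( 1 + \frac{2n+1}{2\mu M} K_\varepsilon \right) ,
\end{equation*}
for every $w \in \overline{B_\varepsilon(w_0)}$. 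With $M, \mu, m, n, l$ all fixed, both $(1 - a/M)$ and $\sqrt{1 - a^2/M^2}$ tend to $0$ as $a \to M^{-}$, so the right-hand side tends to $0$. Thus, for $a$ sufficiently close to $M$,
\begin{equation*}
|\delta(w)| < c_\varepsilon \leqslant |p_e(w)|, \quad w \in \partial B_\varepsilon(w_0).
\end{equation*}

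By Rouché's theorem, $p_e$ and $p_e + \delta = p$ then have the same number of zeros inside $B_\varepsilon(w_0)$, counted with multiplicity. Since $p_e$ has exactly one (simple) zero in $B_\varepsilon(w_0)$, namely $w_0$, the polynomial $p$ has at least one zero in $B_\varepsilon(w_0) \subset U_1(0) \cap (\mathbb{R} \times (-\infty,0))$, which is the claim. The one subtlety I anticipate is merely bookkeeping, namely making the dependence of ``sufficiently close'' on the fixed parameters $m, n, l, \mu, M$ explicit enough to quote later; the analytic core is essentially a continuity-of-roots argument against a perturbation whose size vanishes in the limit $a \to M$.
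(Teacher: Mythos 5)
Your proof is correct and follows essentially the same route as the paper's: the same decomposition $p = p_{e} + \delta$, the same appeal to Theorem~\ref{theorem3} for the simple root of $p_{e}$ in the open lower half-disk, and the same Rouch\'e argument exploiting that $\delta$ tends to $0$ uniformly on bounded sets as $a \rightarrow M$. The only difference is cosmetic: you run Rouch\'e on a small circle around the root $w_0$, whereas the paper runs it on the boundary of the whole region $U_1(0) \cap ({\mathbb{R}} \times (-\infty,0))$; both choices are legitimate because $p_{e}$ has no zeros on either contour.
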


\begin{proof} 
First, according to Theorem~\ref{theorem3}, $p_{e}$
has precisely $1$ simple root in 
$U_{1}(0) \cap ({\mathbb{R}} \times (-\infty,0))$ and 
no roots in 
\begin{equation*}
C := \partial [U_1(0) \cap ({\mathbb{R}} \times (-\infty,0))] \, \, .
\end{equation*}
Further, 
we note that for every $w \in B_1(0)$:
\begin{align*}
&  |\,\delta(w)| =  
\bigg|- \frac{m \left(1 - \frac{a}{M}\right)}{\mu M} 
\, w (w^{2} - 1) + 2 \, \sqrt{1 - \frac{a^2}{M^2}} \left[1 -i \, \frac{2n+1}{2 \mu M} \, 
\, w (w^{2} - 1)\right] \bigg|
\\
& \leqslant 
|\alpha| \cdot |w| \, (\,|w|^2 + 1) \left(1 - \frac{a}{M}\right) +
2 \, \sqrt{1 - \frac{a^2}{M^2}} 
\left[1 + \frac{2n+1}{2 \mu M} \, |w| \, (|w|^2 + 1)\right]
\\
& \leqslant 2 \sqrt{6} \left(1 - \frac{a}{M}\right) + 
 2 \, \sqrt{1 - \frac{a^2}{M^2}} 
\left(1 + \frac{2n+1}{\mu M}\right) \, \, .
\end{align*}
and hence that 
\begin{equation*} 
\|\,\delta|_{B_1(0)}\|_{\infty} \leqslant  
 2 \sqrt{6} \left(1 - \frac{a}{M}\right) + 
 2 \, \sqrt{1 - \frac{a^2}{M^2}} 
\left(1 + \frac{2n+1}{\mu M}\right) \, \, . 
\end{equation*}
Even further, since there are no roots of $p_{e}$ in $C$,
it follows that 
\begin{equation*}
\frac{1}{|g_{e}|} \bigg|_{C}
\end{equation*}
is continuous function and, since $C$ is compact, that there is $\varepsilon > 0$ such that 
\begin{equation*}
\frac{1}{|g_{e}(w)|} \leqslant \varepsilon \, \, ,
\end{equation*}
for every $w \in C$. The latter implies that 
\begin{equation*}
|g_{e}(w)| \geqslant \frac{1}{\varepsilon} \, \, ,  
\end{equation*}
for every $w \in C$.
Hence for $a/M$ sufficiently close to $1$, it follows that 
\begin{align*}
 2 \sqrt{6} \left(1 - \frac{a}{M}\right) + 
 2 \, \sqrt{1 - \frac{a^2}{M^2}} 
\left(1 + \frac{2n+1}{\mu M}\right)  < \frac{1}{\varepsilon} 
\end{align*}
and hence that 
\begin{equation*}
|\delta(w)| < |g_{e}(w)| \, \, ,
\end{equation*}
for every $w \in C$. Hence for such a case, it follows from Rouch\'e's theorem
that there is 
a root of $p = p_{e} + \delta$ in $C$.
\end{proof}

The following proposition rewrites the inequalities (\ref{inequalities}) in terms
of the parameters $\mu, M, m$ and $l$.
\begin{prop}
If 
\begin{equation*}
\frac{25}{149} \, l(l+1) < \mu^2 M^2 < \frac{20}{87} \, m^2 \, \, , 
\end{equation*}
then 
\begin{equation*}
4 < \alpha^2 < 6 \, \, , \, \, \frac{4}{100} < \beta < \frac{165}{100}  
\, \, ,
\end{equation*}
where 
\begin{equation*}
\alpha = \frac{m}{\mu M} \, \, , \, \, 
\beta = 6 - 
\frac{l(l+1)}{\mu^2 M^2} \, \, .
\end{equation*}
If $l = |m| + k$, where $k \in {\mathbb{N}}$, the interval 
\begin{equation*}
\left(\, \frac{25}{149} \, l(l+1) \, , \, \frac{20}{87} \, m^2 \right) 
\end{equation*}
is non-empty, iff
\begin{equation} \label{condition}
|m| > \frac{435}{161} \, (k + 1) + \sqrt{\left(\frac{435}{161}\right)^2 (k+1)^2
+ \frac{435}{161} \, k(k+1)} \, \, .
\end{equation}
We note that 
if $k=0$, (\ref{condition}) leads to 
\begin{equation*}
|m| > 2 \, \frac{435}{161} = \frac{870}{161} \approx 5.40373 \, \, .
\end{equation*}
\end{prop}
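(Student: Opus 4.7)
My plan is to establish both parts of the proposition by elementary algebraic manipulations. For the first implication, I would convert each of the four inequalities $4 < \alpha^2 < 6$ and $\tfrac{4}{100} < \beta < \tfrac{165}{100}$ into an equivalent bound on $\mu^2 M^2$. Using $\alpha = m/(\mu M)$, the $\alpha^2$-inequalities become $m^2/6 < \mu^2 M^2 < m^2/4$; using $\beta = 6 - l(l+1)/(\mu^2 M^2)$, the $\beta$-inequalities become $\tfrac{25}{149}\, l(l+1) < \mu^2 M^2 < \tfrac{20}{87}\, l(l+1)$. I would then verify that the hypothesis $\tfrac{25}{149}\, l(l+1) < \mu^2 M^2 < \tfrac{20}{87}\, m^2$ implies all four via two ingredients: the numerical comparisons $25/149 > 1/6$ and $20/87 < 1/4$ (immediate from $150 > 149$ and $80 < 87$), and the inequality $l(l+1) \geq m^2$ coming from $l \geq |m|$. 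Concretely, the hypothesis yields $\mu^2 M^2 > \tfrac{25}{149}\, l(l+1) \geq \tfrac{25}{149}\, m^2 > \tfrac{1}{6}\, m^2$ forcing $\alpha^2 < 6$, and $\mu^2 M^2 < \tfrac{20}{87}\, m^2 < \tfrac{1}{4}\, m^2$ forcing $\alpha^2 > 4$; the two $\beta$-bounds are handled in the same spirit, using $\tfrac{20}{87}\, m^2 \leq \tfrac{20}{87}\, l(l+1)$ for the upper $\beta$-bound and noting that the lower $\beta$-bound is literally the left half of the hypothesis.

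For the non-emptiness characterisation, the interval $(\tfrac{25}{149}\, l(l+1),\, \tfrac{20}{87}\, m^2)$ is non-empty iff $25\cdot 87 \cdot l(l+1) < 20 \cdot 149 \cdot m^2$, which simplifies (dividing by the common factor $5$) to $435\, l(l+1) < 596\, m^2$. Writing $l = |m| + k$ and expanding $l(l+1) = m^2 + (2k+1)|m| + k(k+1)$, rearrangement produces a quadratic inequality in $|m|$,
\[
161\,|m|^2 - 435(2k+1)\,|m| - 435\,k(k+1) > 0.
\]
Applying the quadratic formula and selecting the positive root (which is the only relevant one, since $|m| > 0$) gives the explicit lower bound on $|m|$ claimed in the proposition, with $c := 435/161$ appearing naturally as the ratio of the coefficients. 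The concluding numerical statement for $k = 0$ is then an immediate substitution.

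The main obstacle, if one can even call it that, is nothing more than careful bookkeeping: one must identify correctly which of the four candidate inequalities for $\mu^2 M^2$ is the binding one from each side, and verify the two elementary numerical comparisons that make this identification work. Beyond that, the entire argument is routine high-school algebra: no polynomial-localisation machinery, no transcendental estimates, and no new techniques are needed.
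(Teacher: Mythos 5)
Your treatment of the first implication is correct and is essentially the paper's own argument: both reduce the four inequalities to the window $\max\{m^2/6,\tfrac{25}{149}\,l(l+1)\} < \mu^2M^2 < \min\{m^2/4,\tfrac{20}{87}\,l(l+1)\}$ and then use $l(l+1)\geqslant m^2$ together with the numerical facts $\tfrac{25}{149}>\tfrac{1}{6}$ and $\tfrac{20}{87}<\tfrac{1}{4}$ to identify the binding bounds. No issue there.

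The second half of your argument has a genuine gap at its final step. Your expansion $l(l+1)=m^2+(2k+1)|m|+k(k+1)$ is correct, and so is the resulting quadratic $161\,|m|^2-435(2k+1)|m|-435\,k(k+1)>0$; but solving that quadratic gives
\begin{equation*}
|m| \; > \; \frac{435}{322}\,(2k+1) \; + \; \sqrt{\left(\frac{435}{322}\right)^{2}(2k+1)^{2} \; + \; \frac{435}{161}\,k(k+1)}\;,
\end{equation*}
i.e.\ a shift of $\tfrac{435}{161}\bigl(k+\tfrac{1}{2}\bigr)$, \emph{not} the shift $\tfrac{435}{161}(k+1)$ appearing in the proposition. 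Your assertion that the quadratic formula ``gives the explicit lower bound on $|m|$ claimed in the proposition'' is therefore false: the two bounds never coincide, and for $k=0$ yours gives $|m|>435/161\approx 2.70$ while the proposition demands $|m|>870/161\approx 5.40$. The source of the mismatch is not your algebra but the paper's: its proof expands $(|m|+k)(|m|+k+1)$ as $m^2+2|m|(k+1)+k(k+1)$, i.e.\ with coefficient $2k+2$ instead of $2k+1$ on $|m|$, and the stated bound inherits that error. A direct check ($k=0$, $|m|=3$: $\tfrac{25}{149}\cdot 12\approx 2.013 < \tfrac{20}{87}\cdot 9\approx 2.069$) confirms the interval is already non-empty for $|m|=3$, so the ``iff'' of the proposition fails as written; the displayed condition on $|m|$ is sufficient for non-emptiness but not necessary. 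To make your proof land, you must either prove the corrected bound above (thereby amending the proposition), or carry out the paper's erroneous expansion, which you cannot do while claiming the expansion $l(l+1)=m^2+(2k+1)|m|+k(k+1)$.
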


\begin{proof}
Since 
\begin{equation*}
\alpha = \frac{m}{\mu M} \, \, , \, \, 
\beta = 6 - 
\frac{l(l+1)}{\mu^2 M^2} \, \, ,
\end{equation*}
the inequalities
\begin{equation*}
4 < \alpha^2 < 6 \, \, \wedge \, \, \frac{4}{100} < \beta < \frac{165}{100}  
\end{equation*}
are equivalent to 
\begin{equation} \label{inequalities}
4 < \frac{m^2}{\mu^2 M^2} < 6 \, \, \wedge \, \, \frac{4}{100} < 6 - 
\frac{l(l+1)}{\mu^2 M^2} < \frac{165}{100}  
\, \, .
\end{equation}
We note the equivalence of the following inequalities
\begin{align*}
&  \frac{4}{100} < 6 - \frac{l(l+1)}{\mu^2 M^2} < \frac{165}{100} \, \, \Leftrightarrow \, \, 
- \frac{4}{100} >  - 6 +  \frac{l(l+1)}{\mu^2 M^2}  > - \frac{165}{100} \, \, , \\
&  6 - \frac{4}{100} >  \frac{l(l+1)}{\mu^2 M^2} > 6 - \frac{165}{100} \, \,\Leftrightarrow \, \,
\frac{149}{25} >   \frac{l(l+1)}{\mu^2 M^2}  > \frac{87}{20} \, \, , \\
& \frac{25}{149} <   \frac{\mu^2 M^2}{l(l+1)}  < \frac{20}{87} \, \,
\Leftrightarrow \, \, 
\frac{25}{149} \, l(l+1) <   \mu^2 M^2  < \frac{20}{87} \, l(l+1) 
\end{align*}
as well as
\begin{align*}
& 4 < \frac{m^2}{\mu^2 M^2} < 6 \, \, \Leftrightarrow \, \, 
\frac{1}{6} < \frac{\mu^2 M^2}{m^2} < \frac{1}{4} \, \, \Leftrightarrow
\frac{m^2}{6} < \mu^2 M^2 < \frac{m^2}{4}  \, \, .
\end{align*}
Hence, (\ref{inequalities}) is equivalent to 
\begin{equation*}
\frac{25}{149} \, l(l+1) < \mu^2 M^2 <
\min\left\{\frac{m^2}{4},\frac{20}{87} \, l(l+1)\right\}  =
\min\left\{\frac{m^2}{4},\frac{l(l+1)}{4.35}\right\}
\, \, , 
\end{equation*}
where we used that, since $l \geqslant |m|$, 
\begin{equation*}
\frac{25}{149} \, l(l+1) \geqslant \frac{25}{149} \, |m|^2 \geqslant
\frac{25}{150} \, |m|^2 = \frac{m^2}{6} \, \, .
\end{equation*}
Since
\begin{equation*}
\frac{m^2}{4.35} =
\min\left\{\frac{m^2}{4},\frac{m^2}{4.35}\right\} \leqslant
\min\left\{\frac{m^2}{4},\frac{l(l+1)}{4.35}\right\}  \, \, ,
\end{equation*}
it follows that the inequality  
\begin{equation*}
\frac{25}{149} \, l(l+1) < \mu^2 M^2 < \frac{20}{87} \, m^2 
\end{equation*}
implies (\ref{inequalities}).
If $l = |m| + k$, where $k \in {\mathbb{N}}$, the interval 
\begin{equation*}
\left(\, \frac{25}{149} \, l(l+1) \, , \, \frac{20}{87} \, m^2 \right) 
\end{equation*}
is non-empty, iff
\begin{align*}
& 
\frac{20}{87} \, m^2 > \frac{25}{149} \, (|m| + k)(|m| + k +1)
\, \, , \\
& 
\frac{20}{87} \, m^2 > \frac{25}{149} \, (|m| + k)(|m| + k +1) =
\frac{25}{149} \, [m^2 + 2 |m| (k + 1) + k(k+1)] 
\, \, , \\
& \frac{805}{12963} \, m^2  > \frac{25}{149} \, [2 |m| (k + 1) + k(k+1)] \, \, , \\
& m^2 > \frac{435}{161} \, [2 |m| (k + 1) + k(k+1)] \, \, , \\
& m^2 - \frac{435}{161} \, [2 |m| (k + 1) + k(k+1)] > 0 \, \, , \\
& \left[|m| - \frac{435}{161} \, (k + 1)\right]^2 - 
\left(\frac{435}{161}\right)^2 (k+1)^2
- \frac{435}{161} k(k+1) > 0 \, \, , \\
& \left[|m| - \frac{435}{161} \, (k + 1)\right]^2 > 
\left(\frac{435}{161}\right)^2 (k+1)^2
+ \frac{435}{161} \, k(k+1) \, \, .
\end{align*}
and hence iff 
\begin{equation*}
|m| > \frac{435}{161} \, (k + 1) + \sqrt{\left(\frac{435}{161}\right)^2 (k+1)^2
+ \frac{435}{161} \, k(k+1)} \, \, .
\end{equation*}
\end{proof}

\section{Approach~2}

Approach~2 uses the subsequent conformal transformation $h$ to transform the open 
lower half-disk $\Omega_2$ onto to the first quadrant $(0,\infty)^2$. The roots of 
$p \circ h^{-1}$ coincide with the roots of the fourth order polynomial $q$, given in 
Definition~\ref{definitionofq}. Subsequently, the argument principle is used to derive 
Theorem~\ref{instabilitytheorem}. Lemmatas~\ref{basiclemma1} and \ref{basiclemma2} prepare
the proof of Theorem~\ref{instabilitytheorem}. Theorem~\ref{maintheorem} shows the existence 
of roots of $p$ in $\Omega_2$ for $a/M$ satisfying the inequality (\ref{newinstabilitycondition2}), i.e., for values down to about $0.979796$.

\begin{lemma}[{\bf A biholomorphic map from the open lower half-disk onto the open 
first quadrant}]
\label{biholomorphicmap}
By 
\begin{equation*}
h(z) := \frac{1 - z}{1 + z}  \, \, ,
\end{equation*}
for every $z \in U_{1}(0) 
\cap ({\mathbb{R}}  \times (-\infty,0))$, there is 
defined a biholomorphic map 
\begin{equation*}
h : U_{1}(0) 
\cap ({\mathbb{R}}  \times (-\infty,0)) \rightarrow 
(0,\infty)^2 \, \, ,
\end{equation*}
with inverse 
\begin{equation*}
h^{-1} :
(0,\infty)^2 \rightarrow U_{1}(0) 
\cap ({\mathbb{R}}  \times (-\infty,0)) \, \, ,
\end{equation*}
defined by 
\begin{equation*}
h^{-1}(u) = \frac{1 - u}{1 + u} \, \, , 
\end{equation*}
for every $u \in (0,\infty)^2$. 
\end{lemma}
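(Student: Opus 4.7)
The plan is to exploit the fact that $h(z) = (1-z)/(1+z)$ is a M\"{o}bius transformation which is, moreover, an involution: a direct computation gives $h(h(z)) = z$ for every $z \in \mathbb{C} \setminus \{-1\}$. This single identity handles most of the claim at once. Since $-1$ lies neither in $U_1(0) \cap (\mathbb{R} \times (-\infty,0))$ nor in $(0,\infty)^2$, the map $h$ is holomorphic on each of these sets with nonvanishing derivative $h'(z) = -2/(1+z)^2$. Once the image statement is proven, the involution identity forces $h$ to be a bijection between the two sets and supplies the explicit formula for the inverse; holomorphy of the inverse then follows from the same computation applied to the target side.

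The remaining content is to show $h\bigl(U_1(0) \cap (\mathbb{R} \times (-\infty,0))\bigr) = (0,\infty)^2$. For this I would use the fact that a M\"{o}bius transformation extends to a homeomorphism of the Riemann sphere, and that it sends generalized circles to generalized circles. The boundary of the lower half-disk (inside the sphere) is the union of the real segment $[-1,1]$ and the lower open semicircle $\{e^{i\theta} : \theta \in (-\pi,0)\}$. I would compute the images of these two arcs explicitly:
\begin{align*}
h(t) &= \frac{1-t}{1+t} \in (0,\infty), \qquad t \in (-1,1),\\
h(e^{i\theta}) &= \frac{1 - e^{i\theta}}{1 + e^{i\theta}} = -\, i \tan(\theta/2) \in (0,\infty)\cdot i, \qquad \theta \in (-\pi, 0),
\end{align*}
together with $h(1) = 0$ and $h(-1) = \infty$. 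Thus the boundary of the lower half-disk is sent onto the boundary of $(0,\infty)^2$ (inside the Riemann sphere), namely the positive real axis, the positive imaginary axis, and the points $0, \infty$.

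Because $h$ is a sphere homeomorphism, it maps the open lower half-disk onto one of the two connected open components cut out by the image boundary, i.e.~onto either $(0,\infty)^2$ or its complement in the sphere. To pick the correct component it suffices to evaluate $h$ at a single interior test point; for instance $h(0) = 1 \in (0,\infty)^2$ does the job.

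There is no real obstacle here: the only non-trivial calculation is the semicircle image $h(e^{i\theta}) = -i\tan(\theta/2)$, which is a short trigonometric identity obtained by multiplying numerator and denominator by $1 + e^{-i\theta}$. Everything else is an invocation of standard M\"{o}bius-transformation facts (involution, preservation of generalized circles, sphere homeomorphism), combined with the single test-point evaluation $h(0)=1$.
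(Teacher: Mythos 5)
Your proof is correct, but it follows a genuinely different route from the paper's. The paper argues by direct computation throughout: it writes $z = x + iy$ with $y<0$ and computes $\frac{1-z}{1+z} = \frac{1-x^2-y^2-2iy}{(1+x)^2+y^2}$ to see that the image lies in the open first quadrant; then, for $u = u_1 + iu_2$ in the first quadrant, it computes the real and imaginary parts of $\frac{1-u}{1+u}$ and estimates its modulus to see that the image lies in the open lower half-disk; and finally it verifies the two composition identities $g\circ h = \mathrm{id}$ and $h\circ g = \mathrm{id}$ explicitly. You instead lean on standard M\"{o}bius-transformation machinery: the involution identity $h\circ h = \mathrm{id}$ on ${\mathbb{C}}\setminus\{-1\}$, preservation of generalized circles, the boundary computation (real segment $\mapsto$ positive real axis, lower semicircle $\mapsto$ positive imaginary axis via $h(e^{i\theta}) = -i\tan(\theta/2)$, and $1\mapsto 0$, $-1\mapsto\infty$), and the single test point $h(0)=1$. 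Both arguments are sound. Your approach buys conceptual economy --- the involution disposes of bijectivity, the inverse formula, and holomorphy of the inverse in one stroke, and it makes transparent why the image is a quadrant --- at the cost of invoking the Jordan-curve/component step (``a sphere homeomorphism maps the interior of a Jordan curve onto a component of the complement of the image curve''), which you state somewhat glibly; if you want it airtight, observe that the image of the open half-disk is open, connected, disjoint from the image of the boundary, and closed in the component of the complement that contains it (being that component's intersection with the compact image of the closed half-disk), hence equal to that whole component. The paper's computation is longer but entirely elementary and self-contained.
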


\begin{proof}
If $z \in U_{1}(0) 
\cap ({\mathbb{R}}  \times (-\infty,0))$, $x := \textrm{Re}(z)$ and $y := 
\textrm{Im}(z) \, (< 0)$, then
\begin{align*}
& \frac{1 - z}{1 + z} = \frac{1 - x - i y}{1 + x + iy} = 
\frac{(1 - x - i y)(1 + x - iy)}{(1 + x + iy)(1 + x - iy)} \\
& = \frac{1 - x^2 - y^2 - 2 i y}{(1 + x)^2 + y^2} \in (0,\infty)^2 \, \, .
\end{align*}
Hence by
\begin{equation*}
h(z) := \frac{1 - z}{1 + z}  \, \, ,
\end{equation*}
for every $z \in U_{1}(0) 
\cap ({\mathbb{R}}  \times (-\infty,0))$, there is 
defined a holomorphic map 
\begin{equation*}
h : U_{1}(0) 
\cap ({\mathbb{R}}  \times (-\infty,0)) \rightarrow 
(0,\infty)^2 \, \, .
\end{equation*}
Further, if $u \in (0,\infty)^2$, $u_1 := \textrm{Re}(u) \, (> 0)$ and
$u_2 := \textrm{Im}(u) \, (> 0)$, then 
\begin{equation*}
\frac{1 - u}{1 + u} = \frac{1 - u_1^2 - u_2^2 - 2 i u_2}{(1 + u_1)^2 + u_2^2}
\in {\mathbb{R}}  \times (-\infty,0) \, \, .
\end{equation*}
In addition, 
\begin{align*}
& \left[\frac{1 - u_1^2 - u_2^2}{(1 + u_1)^2 + u_2^2}\right]^2 + 
\left[\frac{- 2 u_2}{(1 + u_1)^2 + u_2^2} \right]^2 \\
& =
\frac{(1 - u_1^2 - u_2^2)^2 + 4 u_2^2}{[(1 + u_1)^2 + u_2^2]^2} =
\frac{(1 - u_1^2 - u_2^2)^2 + 4 u_2^2}{(1 + u_1^2 + u_2^2 + 2 u_1)^2} \\
& = \frac{(1 + u_1^2 + u_2^2)^2 - 4 (u_1^2 + u_2^2) + 4 u_2^2}{(1 + u_1^2 + u_2^2)^2 + 4 u_1(1 + u_1^2 + u_2^2) + 4 u_1^2} \\
& = \frac{(1 + u_1^2 + u_2^2)^2 - 4 u_1^2}{(1 + u_1^2 + u_2^2)^2 + 4 u_1(1 + u_1^2 + u_2^2) + 4 u_1^2} \\
& <
\frac{(1 + u_1^2 + u_2^2)^2 - 4 u_1^2}{(1 + u_1^2 + u_2^2)^2}
< \frac{(1 + u_1^2 + u_2^2)^2}{(1 + u_1^2 + u_2^2)^2} = 1 \, \, , 
\end{align*}
and hence 
\begin{equation*}
\frac{1 - u}{1 + u} \in U_1(0) \, \, .
\end{equation*}
As a consequence, by 
\begin{equation*}
g(u) := \frac{1 - u}{1 + u} \, \, , 
\end{equation*}
for every $u \in (0,\infty)^2$, there is defined a holomorphic map
\begin{equation*} 
g : (0,\infty)^2 \rightarrow U_{1}(0) 
\cap ({\mathbb{R}}  \times (-\infty,0)) \, \, .
\end{equation*}
Further, for every $z \in U_{1}(0) 
\cap ({\mathbb{R}}  \times (-\infty,0))$,
\begin{equation*}
g(h(z)) = g\left(\frac{1 - z}{1 + z}\right) = 
\frac{1 - \frac{1 - z}{1 + z}}{1 + \frac{1 - z}{1 + z}} =
\frac{1 + z - 1 + z}{1 + z + 1 - z} = \frac{2z}{2} = z 
\end{equation*}
as well as 
\begin{equation*}
h(g(u)) = h\left(\frac{1 - u}{1 + u}\right) = 
\frac{1 - \frac{1 - u}{1 + u}}{1 + \frac{1 - u}{1 + u}} =
\frac{1 + u - 1 + u}{1 + u + 1 - u} = \frac{2u}{2} = u \, \, , 
\end{equation*}
for every  $u \in (0,\infty)^2$.
\end{proof}

We note that 
\begin{align*}
p(w) & = 
w^4 + \frac{\frac{ma}{M} - i (2n + 1)\sqrt{1 - \frac{a^2}{M^2}}}{\mu M} 
\, w (w^{2} - 1) \nonumber \\
& \quad \, \, \, + \left[ 2 - 
\frac{l(l+1)}{\mu^2 M^2} \right] w^{2}  
+ 3 + 2 \sqrt{1 - \frac{a^2}{M^2}} \\
& = w^4 + \alpha w \left(w^2 - 1\right) + (\beta -4) w^2 + 3 + \epsilon \, \, ,
\end{align*}
for every $w \in \Omega_2 =  U_{1}(0) \cap ({\mathbb{R}}  \times (-\infty,0))$, 
where 
\begin{align*}
& \alpha := \frac{\frac{ma}{M} - i (2n + 1)\sqrt{1 - \frac{a^2}{M^2}}}{\mu M}  \, \, , \, \, 
\beta := 6 - 
\frac{l(l+1)}{\mu^2 M^2} \, \, , \, \,  \epsilon := 2 \sqrt{1 - \frac{a^2}{M^2}} 
\, \, (> 0) \, \, .
\end{align*}
In particular, 
\begin{equation*}
\alpha = \alpha_1 - i \alpha_2 \, \, , 
\end{equation*}
where 
\begin{equation*}
\alpha_1 = \frac{ma}{\mu M^2}  \, \, , \, \, 
\alpha_2 = \frac{2n + 1}{\mu M} \, \sqrt{1 - \frac{a^2}{M^2}} \, \, (> 0) \, \, .
\end{equation*}
Further, with the help of the biholomorphic map $h$ from 
Proposition~\ref{biholomorphicmap}, it follows that 
\begin{align*}
& (p \circ h^{-1})(z) \\
& = \frac{1}{(1+z)^4}\left[(\beta +\epsilon) z^4  + 4 (2+\alpha +\epsilon) z^3 + 
2 (16 - \beta + 3 \epsilon) z^2 + 
4  (2 - \alpha + \epsilon) 
z + 1 \right]
\end{align*}
for every $z \in (0,\infty)^2$.
Hence,  we make the following 
\begin{definition}{(\bf Definition of $q$)}
We define for $\beta \geqslant 0$ 
\begin{equation}  \label{definitionofq}
q(z) := z^4  + \frac{4 z}{\beta +\epsilon} \left[
(2 + \epsilon)(z^2+1) + \alpha (z^2-1)\right] 
+ 
2 \, \frac{16 - \beta + 3 \epsilon}{\beta +\epsilon} \, z^2 + 1 
\, \, , 
\end{equation}
for every $z \in {\mathbb{C}}$.
\end{definition}

\begin{figure} 
\centering
\includegraphics[width=5.6cm,height=5.2752cm]{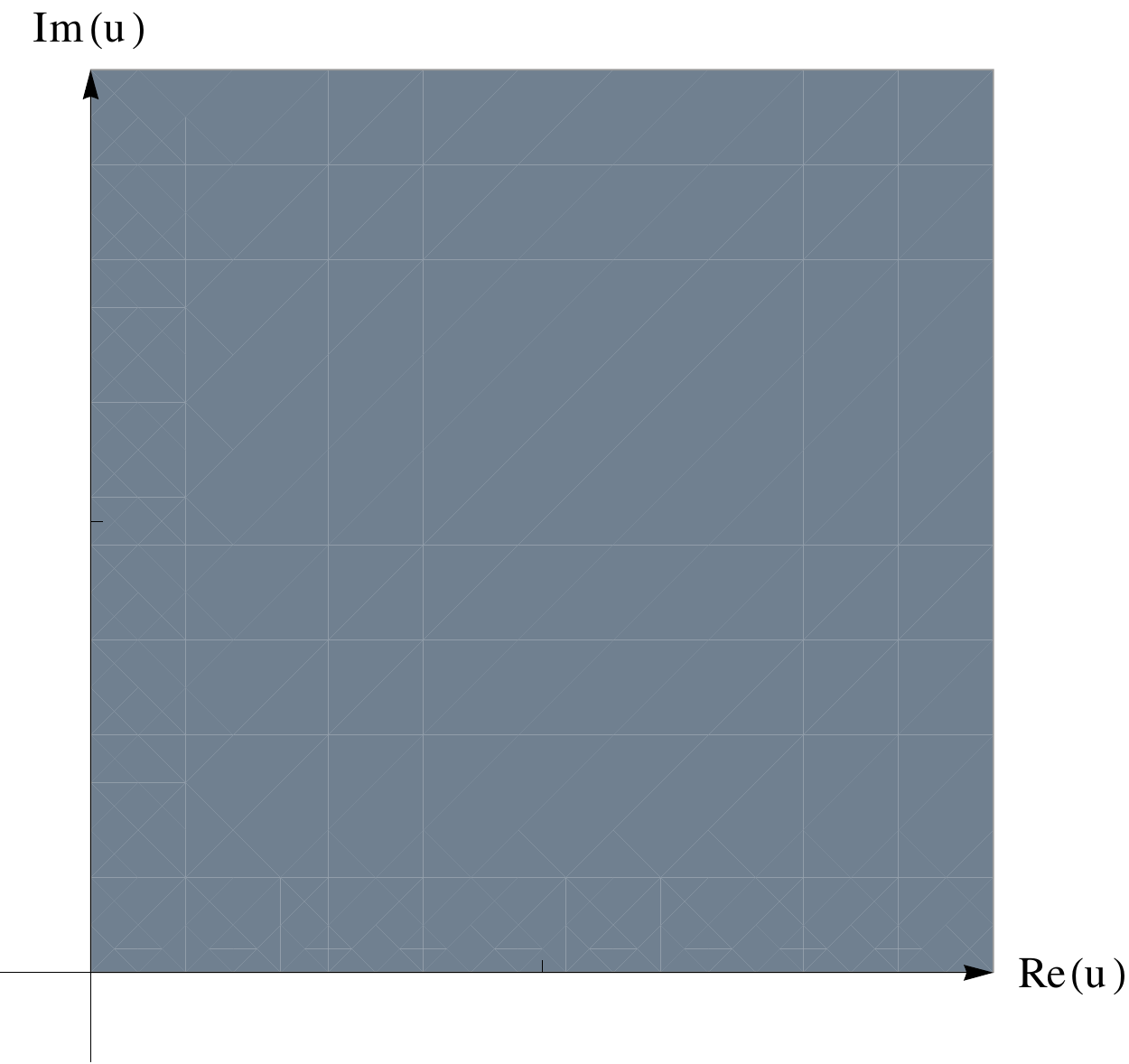}
\caption{The domain of values of $u$ inside the domain of 
$q$, leading on unstable $\lambda$ is given by the open first quadrant, shaded in gray.}.
\label{fig4}
\end{figure}

\begin{lemma}{(\bf Instability in terms of roots of $q$)}
If $R = r_{+}$, i.e., $R_{-} = 2 (M^2 - a^2)^{1/2}$, 
and $\lambda$ satisfies (\ref{exceptions1}),
then $\lambda \in {\mathbb{R}} \times (-\infty,0)$ is such 
that   
$\ker(A - \lambda B - \lambda^2)$ is non-trivial, if and only if  
\begin{equation*}
\lambda = - \mu \, \frac{1 + z^2}{1 - z^2} \, \, , 
\end{equation*}
for a root $z$ of $q$ contained in the open first quadrant, $(0,\infty)^2$. 
\end{lemma}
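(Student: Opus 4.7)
The plan is to combine the previous \textbf{Instability in terms of roots of $p$} lemma with the biholomorphic change of variables from Lemma~\ref{biholomorphicmap} and the algebraic identity that connects $p\circ h^{-1}$ with $q$. By the earlier lemma, the kernel $\ker(A-\lambda B-\lambda^2)$ is non-trivial (under the stated hypotheses on $R$ and $\lambda$) if and only if $\lambda = -(\mu/2)(w+1/w)$ for some root $w$ of $p$ contained in $\Omega_2 = U_{1}(0)\cap(\mathbb{R}\times(-\infty,0))$. Since $h$ is a bijection from $\Omega_2$ onto $(0,\infty)^2$, the roots of $p$ in $\Omega_2$ are in one-to-one correspondence with the $z\in(0,\infty)^2$ satisfying $p(h^{-1}(z))=0$; so the lemma reduces to (a) identifying these with the roots of $q$ in $(0,\infty)^2$, and (b) re-expressing the $\lambda$-formula in terms of $z=h(w)$.

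For step (a), the plan is to use the identity displayed immediately before the definition of $q$, namely
\begin{equation*}
(1+z)^4\,(p\circ h^{-1})(z) = (\beta+\epsilon)\,q(z)\,,
\end{equation*}
which one obtains from the relations $w=\tfrac{1-z}{1+z}$, $w^2-1=\tfrac{-4z}{(1+z)^2}$, $w^2=\tfrac{(1-z)^2}{(1+z)^2}$, and direct substitution into $p(w) = w^4 + \alpha w(w^2-1) + (\beta-4)w^2 + 3+\epsilon$, followed by multiplication by $(1+z)^4$. For $z$ in the open first quadrant one has $1+z\neq 0$, and since $\epsilon>0$ and $\beta\geqslant 0$ (which is the only case used in the definition), one has $\beta+\epsilon>0$. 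Consequently $p(h^{-1}(z))=0$ if and only if $q(z)=0$, giving the equivalence between roots of $p$ in $\Omega_2$ and roots of $q$ in $(0,\infty)^2$.

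For step (b), using $w=h^{-1}(z)=(1-z)/(1+z)$ one computes directly
\begin{equation*}
w+\frac{1}{w}=\frac{1-z}{1+z}+\frac{1+z}{1-z}=\frac{(1-z)^2+(1+z)^2}{(1+z)(1-z)}=\frac{2(1+z^2)}{1-z^2}\,,
\end{equation*}
so that $\lambda=-\tfrac{\mu}{2}(w+1/w)= -\mu\,(1+z^{2})/(1-z^{2})$, which is exactly the formula in the lemma statement.

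I do not expect any real obstacle, since everything reduces to the preceding instability lemma for $p$ together with bookkeeping along the conformal map $h$; the only point requiring minor care is to verify the algebraic identity linking $(p\circ h^{-1})$ and $q$, and to confirm that $1+z\neq 0$ and $\beta+\epsilon\neq 0$ on the relevant domain so that the division by $(1+z)^4$ and by $\beta+\epsilon$ preserves the root set. Once this is done, the equivalence of (kernel non-trivial) with (existence of a root $z\in(0,\infty)^2$ of $q$) and the formula for $\lambda$ follow immediately.
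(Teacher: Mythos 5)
Your proposal is correct and follows exactly the route the paper intends (the paper states this lemma without writing out the proof, having just set up the map $h$ and the identity relating $p\circ h^{-1}$ to $q$): reduce to the lemma on roots of $p$ in $\Omega_2$, transport roots via the biholomorphism $h$, and rewrite $\lambda=-\tfrac{\mu}{2}(w+1/w)$ as $-\mu(1+z^2)/(1-z^2)$. Your identity $(1+z)^4(p\circ h^{-1})(z)=(\beta+\epsilon)\,q(z)$ is the right one — note that the constant term "$+1$" in the paper's displayed bracket for $(1+z)^4(p\circ h^{-1})(z)$ is a typo for $+(\beta+\epsilon)$, consistent with the constant term $1$ in the definition of $q$ after dividing by $\beta+\epsilon$.
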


\begin{lemma} \label{basiclemma1}
The polynomial $q$ has no real roots. In addition, if 
\begin{equation*}
\alpha_{1} \geqslant 0 \, \, \wedge  \, \, 0 \leqslant \beta \leqslant 16 + 3 \epsilon \, \, ,  
\end{equation*}
then 
\begin{equation*}
\lim_{x \rightarrow \infty} 
\arctan\left(\frac{\textrm{Im}(q(x))}{\textrm{Re}(q(x))}\right) = 0 \, \, .
\end{equation*} 
\end{lemma}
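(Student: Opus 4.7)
The plan is to split $q(x)$ for real $x$ into its real and imaginary parts, exploiting the fact that $\alpha = \alpha_1 - i\alpha_2$ is the only source of imaginary contributions in $q$, and then deal with each claim by elementary calculation.

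First I would record that for every real $x$,
\[
\textrm{Im}(q(x)) = -\frac{4\alpha_2}{\beta+\epsilon}\,x(x^2-1),
\]
\[
\textrm{Re}(q(x)) = x^4 + \frac{4(2+\epsilon+\alpha_1)}{\beta+\epsilon}\,x^3 + \frac{2(16-\beta+3\epsilon)}{\beta+\epsilon}\,x^2 + \frac{4(2+\epsilon-\alpha_1)}{\beta+\epsilon}\,x + 1.
\]
A real root of $q$ must annihilate both parts simultaneously. Since $\alpha_2 > 0$ and $\beta+\epsilon > 0$, the imaginary part vanishes on $\mathbb{R}$ only at $x \in \{-1,0,1\}$, so I would check these three points by direct substitution. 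A short calculation gives $\textrm{Re}(q(0)) = 1$, $\textrm{Re}(q(1)) = (16\epsilon + 48)/(\beta+\epsilon)$ and $\textrm{Re}(q(-1)) = 16/(\beta+\epsilon)$, each strictly positive. This rules out real roots; notice that this part of the lemma does not actually use the additional hypotheses $\alpha_1 \geq 0$ and $\beta \leq 16 + 3\epsilon$, only $\beta+\epsilon > 0$ and $\alpha_2 > 0$.

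For the limit, I would just read off the leading-order asymptotics: $\textrm{Re}(q(x))$ is a quartic with leading coefficient $1$, while $\textrm{Im}(q(x))$ is a cubic with leading coefficient $-4\alpha_2/(\beta+\epsilon)$, so
\[
\frac{\textrm{Im}(q(x))}{\textrm{Re}(q(x))} = O\!\left(\frac{1}{x}\right) \longrightarrow 0 \qquad \text{as } x \to +\infty.
\]
Under the hypotheses $\alpha_1 \geq 0$ and $0 \leq \beta \leq 16 + 3\epsilon$ the coefficients of $x^3$ and $x^2$ in $\textrm{Re}(q(x))$ are non-negative, so that $\textrm{Re}(q(x)) > 0$ for all sufficiently large $x$; hence the ratio lies in the domain of the principal branch of $\arctan$ eventually, and continuity of $\arctan$ at $0$ yields the claimed limit.

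The main obstacle is purely clerical: one must be careful when splitting $q$ into real and imaginary parts, since the prefactor $4z/(\beta+\epsilon)$ is real but the bracket it multiplies carries the complex parameter $\alpha$; once one reorganises the result as a real polynomial plus $-i\alpha_2$ times another real polynomial, the rest is routine substitution and an asymptotics count. The hypotheses $\alpha_1 \geq 0$ and $\beta \leq 16+3\epsilon$ are only lightly used here (for positivity of $\textrm{Re}(q(x))$ at large $x$), but one anticipates they will be doing real work in the winding-number argument that follows in Lemma~\ref{basiclemma2} and Theorem~\ref{instabilitytheorem}, where positivity of $\textrm{Re}(q(x))$ throughout $(0,\infty)$ (not merely at infinity) becomes important.
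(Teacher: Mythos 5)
Your proposal is correct and follows essentially the same route as the paper's proof: split $q(x)$ into real and imaginary parts, observe that $\operatorname{Im}(q(x))=\frac{4\alpha_2}{\beta+\epsilon}\,x(1-x^2)$ vanishes on $\mathbb{R}$ only at $x\in\{-1,0,1\}$ (using $\alpha_2>0$), check the real part there, and obtain the limit from the degree count $\operatorname{Im}(q(x))=O(x^3)$ versus $\operatorname{Re}(q(x))\sim x^4$. Your separate evaluations $\operatorname{Re}(q(1))=(48+16\epsilon)/(\beta+\epsilon)$ and $\operatorname{Re}(q(-1))=16/(\beta+\epsilon)$ are in fact more accurate than the paper's, which asserts $\operatorname{Re}(q(-1))=\operatorname{Re}(q(1))=16(3+\epsilon)/(\beta+\epsilon)$ — a harmless slip, since all these quantities are strictly positive and the conclusion is unchanged.
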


\begin{proof}
It follows that
\begin{align*}
q(x) & = x^4  + \frac{4 x}{\beta +\epsilon} \left[
(\epsilon + 2)(x^2+1) + \alpha (x^2-1)\right] 
+ 
2 \, \frac{16 - \beta + 3 \epsilon}{\beta +\epsilon} \, x^2 + 1 \\
& =  x^4  + \frac{4 x}{\beta +\epsilon} \left[
(\epsilon + 2)(x^2+1) + \alpha_1 (x^2-1)\right] 
+ 
2 \, \frac{16 - \beta + 3 \epsilon}{\beta +\epsilon} \, x^2 + 1  \\
& \quad \, + i \frac{4 \alpha _2}{\beta +\epsilon} \, x (1 - x^2) \\
& 
=  x^4  + \frac{4 x}{\beta +\epsilon} \left\{
(\alpha_1 + \epsilon + 2) x^2 - [\alpha_1 -(\epsilon + 2)] \right\} 
+ 
2 \, \frac{16 - \beta + 3 \epsilon}{\beta +\epsilon} \, x^2 + 1  \\
& \quad \, + i \frac{4 \alpha _2}{\beta +\epsilon} \, x (1 - x^2)
\, \, ,
\end{align*}
for very $x \in {\mathbb{R}}$. We note that $q$ has no real roots. This can be seen as follows. If $x$ is a real root of $q$, then $\textrm{Im}(q(x)) = 0$
and hence $x \in \{-1,0,1\}$. Further, 
\begin{align*}
& \textrm{Re}(q(0)) = 1 \neq 0 \, \, , \\
& \textrm{Re}(q(-1)) = \textrm{Re}(q(1)) \\
& =
1  + \frac{4}{\beta +\epsilon} \left\{
(\alpha_1 + \epsilon + 2) - [\alpha_1 -(\epsilon + 2)] \right\} 
+ 
2 \, \frac{16 - \beta + 3 \epsilon}{\beta +\epsilon} + 1 \\
& = 2 + \frac{8(\epsilon + 2)}{\beta +\epsilon} + 2 \, \frac{16 - \beta + 3 \epsilon}{\beta +\epsilon} = \frac{2(\beta +\epsilon)}{\beta +\epsilon} + \frac{8(\epsilon + 2)}{\beta +\epsilon} + 2 \, \frac{16 - \beta + 3 \epsilon}{\beta +\epsilon} \\
& = \frac{2(\beta +\epsilon) + 8(\epsilon + 2)+ 2(16 - \beta + 3 \epsilon)}{\beta +\epsilon} = 16 \, \frac{3 + \epsilon}{\beta +\epsilon} \neq 0 \, \, .
\end{align*}
If 
\begin{equation*}
\alpha_{1} \geqslant 0 \, \, \wedge  \, \, 0 \leqslant \beta \leqslant 16 + 3 \epsilon \, \, ,  
\end{equation*}
then 
\begin{equation*}
\textrm{Re}(q(x)) \geqslant x^4 + 1 \, \, , \, \, \text{if $x > 1$}
\end{equation*}
as well as 
\begin{equation*}
\textrm{Im}(q(x)) = \frac{4 \alpha _2}{\beta +\epsilon} \, x (1 - x^2) <
0 \, \, ,  \, \, \text{if $x > 1$} \, \, ,
\end{equation*}
implying that 
\begin{equation*}
0 > \frac{\textrm{Im}(q(x))}{\textrm{Re}(q(x))} 
\geqslant   \frac{4 \alpha _2}{\beta +\epsilon} \, \frac{x (1 - x^2)}{ x^4 + 1}
\, \, \text{if $x > 1$} 
\end{equation*}
and hence that 
\begin{equation*}
\lim_{x \rightarrow \infty}  \frac{\textrm{Im}(q(x))}{\textrm{Re}(q(x))}  = 0 \, \, , \, \,  \lim_{x \rightarrow \infty} 
\arctan\left(\frac{\textrm{Im}(q(x))}{\textrm{Re}(q(x))}\right) = 0 \, \, .
\end{equation*}
\end{proof}

\begin{lemma} \label{basiclemma2}
For 
\begin{equation*}
\alpha_1 > 2 + \epsilon \lor \alpha_1 <  - (2 + \epsilon) \, \, , 
\end{equation*}
the polynomial 
$q$ has no purely imaginary roots.
\begin{itemize}
\item[(i)] If
\begin{equation*}
\alpha_1 > 2 + \epsilon \, \, , 
\end{equation*}
the function 
\begin{equation*}
h_2 := ({\mathbb{R}} \rightarrow {\mathbb{R}} ,
y \mapsto {\textrm{Im}}(q(iy)) )
\end{equation*}
is strictly decreasing and 
\begin{equation*}
h_2((0,\infty)) \subset (-\infty,0) \, \, . 
\end{equation*}
\item[(ii)]
If
\begin{equation*}
\alpha_1 < -(2 + \epsilon) \, \, , 
\end{equation*}
the function $h_2$
is strictly increasing and 
\begin{equation*}
h_2((0,\infty)) \subset (0,\infty) \, \, . 
\end{equation*}
\item[(iii)] If
\begin{equation*}
0 < \beta < 8 + \epsilon \, \, ,
\end{equation*}
then 
\begin{equation*}
h_1 := ({\mathbb{R}} \rightarrow {\mathbb{R}} ,
y \mapsto {\textrm{Re}}(q(iy)))
\end{equation*}
has precisely $2$ positive roots $y_0,y_1$, satisfying 
$0 < y_0 < y_1$. In addition, 
\begin{equation*}
\begin{cases}
h_1(y) > 0 & \text{for $0 \leqslant y < {\bar{y}}_{0}$} \\
h_1({\bar{y}}_{0}) = 0 \text{} \\
h_1(y) < 0 & \text{for ${\bar{y}}_{0} < y < {\bar{y}}_{1}$} \\
h_1({\bar{y}}_{1}) = 0 \text{} \\
h_1(y) > 0 & \text{for $y > {\bar{y}}_{1}$} 
\end{cases}
\, \, .
\end{equation*}
\end{itemize}  
\end{lemma}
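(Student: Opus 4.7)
The plan is to evaluate $q$ on the imaginary axis by substituting $z = iy$ with $y \in \mathbb{R}$, separate real and imaginary parts explicitly, and then analyze the two real one-variable functions $h_1$ and $h_2$ directly. Using $z^2 = -y^2$, $z^2+1 = 1-y^2$ and $z^2-1 = -(1+y^2)$, together with $\alpha = \alpha_1 - i\alpha_2$, a short computation should give
\begin{equation*}
h_2(y) = \frac{4y}{\beta+\epsilon}\bigl[(2+\epsilon-\alpha_1) - (2+\epsilon+\alpha_1)\,y^2\bigr],
\end{equation*}
\begin{equation*}
h_1(y) = y^4 - \frac{4\alpha_2}{\beta+\epsilon}\,(y + y^3) - 2\,\frac{16-\beta+3\epsilon}{\beta+\epsilon}\,y^2 + 1.
\end{equation*}
All three claims then reduce to an analysis of these explicit formulas.

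For parts (i) and (ii), I would differentiate $h_2$, obtaining
\begin{equation*}
h_2'(y) = \frac{4}{\beta+\epsilon}\bigl[(2+\epsilon-\alpha_1) - 3(2+\epsilon+\alpha_1)\,y^2\bigr].
\end{equation*}
Under the hypothesis $\alpha_1 > 2+\epsilon$ of (i), both $(2+\epsilon-\alpha_1) < 0$ and $3(2+\epsilon+\alpha_1) > 0$, so every term inside the bracket is nonpositive with a strictly negative constant term, giving $h_2'(y) < 0$ for all $y \in \mathbb{R}$. Monotonic decrease through $h_2(0) = 0$ then yields $h_2((0,\infty)) \subset (-\infty,0)$. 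Case (ii) is symmetric: the hypothesis $\alpha_1 < -(2+\epsilon)$ reverses both signs, so $h_2'(y) > 0$ everywhere. The absence of purely imaginary roots under $|\alpha_1| > 2+\epsilon$ is then immediate: any $y \neq 0$ satisfies $h_2(y) \neq 0$ by strict monotonicity through zero, while at $y=0$ one has $h_1(0) = 1 \neq 0$.

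For part (iii), I would regard $h_1$ as a real quartic and apply Descartes' rule of signs. Its coefficient sequence is
\begin{equation*}
\left(1,\; -\tfrac{4\alpha_2}{\beta+\epsilon},\; -\tfrac{2(16-\beta+3\epsilon)}{\beta+\epsilon},\; -\tfrac{4\alpha_2}{\beta+\epsilon},\; 1\right),
\end{equation*}
with sign pattern $(+,-,-,-,+)$ under the given hypotheses together with the standing assumptions $\alpha_2 > 0$ and $\epsilon > 0$ (note $0 < \beta < 8+\epsilon$ implies $\beta < 16+3\epsilon$), so Descartes' rule allows at most two positive roots. A one-line evaluation gives
\begin{equation*}
h_1(1) = \frac{4(\beta - \epsilon - 8 - 2\alpha_2)}{\beta+\epsilon} < 0
\end{equation*}
under the same hypotheses, while $h_1(0) = 1 > 0$ and $h_1(y) \to +\infty$ as $y \to \infty$. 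The intermediate value theorem therefore supplies roots $\bar y_0 \in (0,1)$ and $\bar y_1 \in (1,\infty)$; Descartes then forces these to be the only positive roots, necessarily simple, and the three sign regions stated in the lemma follow directly from $h_1(0) > 0$, $h_1(1) < 0$ and $h_1(+\infty) = +\infty$.

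No serious obstacle is anticipated: the argument is essentially a careful bookkeeping of real and imaginary parts of $q(iy)$, followed by a monotonicity check for $h_2$ and a Descartes-plus-intermediate-value count for $h_1$. The one place where care is required is the separation $q(iy) = h_1(y) + i h_2(y)$, where the sign convention $\alpha = \alpha_1 - i\alpha_2$ together with the prefactor $iy$ in front of $\alpha$ must be handled correctly; once the formulas above are in hand, the remainder of the proof is elementary real analysis.
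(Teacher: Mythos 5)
Your proposal is correct. For the decomposition $q(iy)=h_1(y)+ih_2(y)$, for parts (i) and (ii), and for the exclusion of purely imaginary roots, your argument coincides with the paper's: the same explicit formulas, the same derivative $h_2^{\prime}(y)=\frac{4}{\beta+\epsilon}\bigl[(2+\epsilon-\alpha_1)-3(2+\epsilon+\alpha_1)y^2\bigr]$, and the same observation that strict monotonicity of $h_2$ through $h_2(0)=0$ together with $h_1(0)=1$ rules out roots on $i{\mathbb{R}}$. For part (iii), however, you take a genuinely shorter route. The paper first proves that $h_1$ is strictly decreasing on $I_1=\bigl(0,\sqrt{(16-\beta+3\epsilon)/(\beta+\epsilon)}\,\bigr)$, locates the two positive roots $y_0<y_1$ of the $\alpha_2$-free quartic $y^4-2\frac{16-\beta+3\epsilon}{\beta+\epsilon}y^2+1$, checks $h_1(y_j)=-\frac{4\alpha_2}{\beta+\epsilon}y_j(y_j^2+1)<0$, applies the intermediate value theorem twice, and then pins the count of positive roots at exactly two by computing the discriminant of $h_1$ and running a three-way case analysis on the sign of $8+\epsilon-\beta-2\alpha_2$, invoking exactness of Descartes' rule only in the all-real-roots case. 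You instead apply Descartes' rule of signs directly to $h_1$, whose coefficient pattern $(+,-,-,-,+)$ (valid since $0<\beta<8+\epsilon$ forces $16-\beta+3\epsilon>0$) caps the number of positive roots at two regardless of whether the remaining roots are real, and you manufacture the two roots from $h_1(0)=1>0$, $h_1(1)=\frac{4(\beta-\epsilon-8-2\alpha_2)}{\beta+\epsilon}<0$, and $h_1(y)\to+\infty$. This eliminates the discriminant computation, the auxiliary roots $y_0,y_1$, the monotonicity lemma on $I_1$, and the case analysis, while still delivering simplicity of $\bar{y}_0,\bar{y}_1$ and the full sign chart; the only thing the paper's longer argument buys in addition is the explicit localization $\bar{y}_0<y_0$ and $\bar{y}_1>y_1$, which is not used elsewhere.
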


\begin{proof}
It follows that
\begin{align*}
q(iy) 
& = 
y^4 - 2 \, \frac{16 - \beta + 3 \epsilon}{\beta +\epsilon } \, y^2 + 1 - \frac{4 \alpha _2}{\beta +\epsilon } \, y (y^2 + 1) \\
& \quad \, 
-i \frac{4 y}{\beta +\epsilon} \left[ (\epsilon + 2) \left(y^2 -1\right) +
\alpha _1 \left(y^2 + 1\right) \right]  \\
& = \left(y^2-\frac{16-\beta +3 \epsilon }{\beta +\epsilon }\right)^{2} - \frac{8 (8 + \epsilon  - \beta ) (4+\epsilon )}{(\beta +\epsilon )^2}
 - \frac{4 \alpha _2}{\beta +\epsilon } \, y (y^2 + 1 ) \\
& \quad \, 
\, -i \frac{4 y}{\beta +\epsilon} \left[ (\alpha_1 + \epsilon + 2) y^2 +
(\alpha _1 - (\epsilon + 2)) \right]  \, \, ,  
\end{align*}
for every $y \in {\mathbb{R}}$.
We note that for 
\begin{equation*}
\alpha_1 > 2 + \epsilon \lor \alpha_1 <  - (2 + \epsilon) \, \, , 
\end{equation*}
$q$ has no purely imaginary roots. This can be seen as follows. 
If $i y$, where $y \in {\mathbb{R}}$ is a purely imaginary root of $q$, then 
$\textrm{Im}(q(iy)) = 0$ and hence $y = 0$.
On the other hand, 
\begin{equation*}
\textrm{Re}(q(0)) = 1 \neq 0 \, \, . 
\end{equation*}
We note, for 
\begin{equation*}
\alpha_1 > 2 + \epsilon \, \, , 
\end{equation*}
that the function 
\begin{equation*}
h_2 := ({\mathbb{R}} \rightarrow {\mathbb{R}} ,
y \mapsto {\textrm{Im}}(q(iy)) )
\end{equation*}
is strictly decreasing, since 
\begin{equation*}
h_2^{\prime}(y) = -\frac{4}{\beta +\epsilon }\left\{\left[\alpha _{1 }-(2+ \epsilon )\right]+3 \left[\alpha _{1 }+(2+ \epsilon )\right] y^2 \right\} < 0 \, \, , 
\end{equation*}
for every $y \in {\mathbb{R}}$.
Also, it follows for $y > 0$, that 
\begin{align*}
h_2(y) = \textrm{Im}(q(iy)) = - \frac{4 y}{\beta +\epsilon} \left[ (\alpha_1 + \epsilon + 2) y^2 +
(\alpha _1 - (\epsilon + 2)) \right]  < 0 \, \, .
\end{align*}
Analogously, 
for 
\begin{equation*}
\alpha_1 <  -(2 + \epsilon) \, \, , 
\end{equation*}
the function $h_2$
is strictly increasing, since 
\begin{equation*}
h_2^{\prime}(y) = -\frac{4}{\beta +\epsilon }\left\{\left[\alpha _{1 }-(2+ \epsilon )\right]+3 \left[\alpha _{1 }+(2+ \epsilon )\right] y^2 \right\} > 0 \, \, , 
\end{equation*}
for every $y \in {\mathbb{R}}$.
Also, it follows for $y > 0$, that 
\begin{align*}
h_2(y) = \textrm{Im}(q(iy)) = - \frac{4 y}{\beta +\epsilon} \left[ (\alpha_1 + \epsilon + 2) y^2 +
(\alpha _1 - (\epsilon + 2)) \right] > 0 \, \, .
\end{align*}
We note, for
\begin{equation*}
0 < \beta < 8 + \epsilon \, \, ,
\end{equation*}
that the function 
\begin{equation*}
h_1 := ({\mathbb{R}} \rightarrow {\mathbb{R}} ,
y \mapsto {\textrm{Re}}(q(iy)) )
\end{equation*}
is strictly decreasing on 
\begin{equation*}
I_1 := \left(0\,,\sqrt{\frac{16-\beta +3 \epsilon}{\beta +\epsilon }}\,\right) \, \, ,
\end{equation*} 
since 
\begin{align*}
h_1^{\prime}(y) =
4y\left(y^2-\frac{16-\beta +3 \epsilon}{\beta +\epsilon }\right)-\frac{4 \alpha _2\left(3 y^2+1\right) }{\beta +\epsilon} < 0 \, \, , 
\end{align*}
for every $y \in I_1$, where 
\begin{align*}
& \sqrt{\frac{16-\beta +3 \epsilon}{\beta +\epsilon }} =
\sqrt{\frac{16 + 4 \epsilon - (\beta + \epsilon)}{\beta +\epsilon }} =
\sqrt{\frac{16 + 4 \epsilon}{\beta +\epsilon} - 1} > 
\sqrt{\frac{16 + 4 \epsilon}{8 + 2 \epsilon} - 1}  = 1 \, \, .
\end{align*}
Further, we note that the following inequalities are equivalent:
\begin{align*}
& 16-\beta +3 \epsilon > 2 \sqrt{2} \sqrt{(4+\epsilon ) (8-\beta +\epsilon )} \, \, , \\
& (16-\beta + 3 \epsilon)^2 > 8 (4+\epsilon ) (8-\beta +\epsilon ) \, \, , \\
& [8 -\beta + \epsilon + 2(4 + \epsilon)]^2 > 8 (4+\epsilon ) (8-\beta +\epsilon ) \, \, , \\
& (8 -\beta + \epsilon)^2 + 4 (4+\epsilon ) (8-\beta +\epsilon) + 4(4 + \epsilon)^2
> 8 (4+\epsilon ) (8-\beta +\epsilon ) \\
& (8 -\beta + \epsilon)^2 - 4 (4+\epsilon ) (8-\beta +\epsilon) + 4(4 + \epsilon)^2
> 0 \, \, , \\
& [8 -\beta + \epsilon - 2(4 + \epsilon)]^2 > 0 \, \, .
\end{align*}
Since
\begin{equation*}
y_j^4 - 2 \, \frac{16 - \beta + 3 \epsilon}{\beta +\epsilon } \, y_j^2 + 1
= 0 \, \, , 
\end{equation*}
for $j \in \{0,1\}$,
where
\begin{align*}
(\, 0 < \, ) \, \, y_0 & := \sqrt{\frac{16-\beta +3 \epsilon -2 \sqrt{2} \sqrt{(4+\epsilon ) (8-\beta +\epsilon )}}{\beta +\epsilon }} < \sqrt{\frac{16-\beta +3 \epsilon }{\beta +\epsilon }} \, \, , \\
y_1 & := \sqrt{\frac{16-\beta +3 \epsilon + 2 \sqrt{2} \sqrt{(4+\epsilon ) (8-\beta +\epsilon )}}{\beta +\epsilon }} > \sqrt{\frac{16-\beta +3 \epsilon }{\beta +\epsilon }} \, \, , 
\end{align*}
it follows that 
\begin{align*}
h_1(y_0) = - \frac{4 \alpha _2}{\beta +\epsilon } \, y_0 (y_0^2 + 1 ) < 0 \, \, , 
\, \, h_1(y_1) = - \frac{4 \alpha _2}{\beta +\epsilon } \, y_1 (y_1^2 + 1 ) < 0 \, \, .
\end{align*}
Since $h_1(0) = 1 > 0$, there is ${\bar{y}}_{0} \in (0,y_0) \subset I_1$ such that 
\begin{equation*}
h_1({\bar{y}}_{0}) = 0 \, \, .
\end{equation*}
As a consequence of the fact that
$h_{1}$ is strictly decreasing on $I_1$, it follows that 
\begin{equation*}
\begin{cases}
h_1(y) > 0 & \text{for $0 \leqslant y < {\bar{y}}_{0}$} \\
h_1({\bar{y}}_{0}) = 0 \text{} \\
h_1(y) < 0 & \text{for $y \in I_1$ such that $y > {\bar{y}}_{0}$} 
\end{cases}
\, \, .
\end{equation*}
Since, for $y \in {\mathbb{R}}$, 
\begin{align*}
& h_1(y) = 
y^4 - 2 \, \frac{16 - \beta + 3 \epsilon}{\beta +\epsilon } \, y^2 + 1 - \frac{4 \alpha _2}{\beta +\epsilon } \, y (y^2 + 1) \, \, ,
\end{align*}
for sufficiently large $y > 0$, is dominated by the highest power, i.e., $4$, 
there is $\xi_1 > y_1$, such that $h_1(\xi_1) > 0$. Hence there is 
${\bar{y}}_1 \in (y_1,\xi_1)$, such that 
\begin{equation*}
h_1({\bar{y}}_1) = 0 \, \, .
\end{equation*}
We note that the discriminant $\triangle$ of $h_1$ is given by
\begin{align*}
\triangle = 
\frac{4096}{(\beta +\epsilon )^6} \left(8+\epsilon -\beta +2 \alpha _2\right) \left(8+\epsilon -\beta -2 \alpha _2\right) \left(8 \beta +8 \epsilon +2 \beta  \epsilon +2 \epsilon ^2+\alpha _2^2\right)^2 \, \, .
\end{align*}
Hence, if 
\begin{equation*}
\alpha _2 < \frac{1}{2} \, (8 + \epsilon -\beta) \, \, , 
\end{equation*}
then 
\begin{equation*}
\triangle > 0 
\end{equation*}
and $h_1$ has only real roots. In these cases Descartes' rule of signs 
is exact, see, e.g., Corollary~10.1.12 in \cite{rahman}. Since, 
\begin{equation*}
h_1(-y) = 
y^4 - 2 \, \frac{16 - \beta + 3 \epsilon}{\beta +\epsilon } \, y^2 + 1 + \frac{4 \alpha _2}{\beta +\epsilon } \, y (y^2 + 1) \, \, ,
\end{equation*}
for every $y \in {\mathbb{R}}$, and there are $2$ sign changes 
in the previous polynomial, this polynomial has precisely $2$ positive roots. 
As a consequence, $h_1$ has precisely $2$ negative roots and $2$ positive roots, 
the latter given by ${\bar{y}}_{0}$ and ${\bar{y}}_{1}$. 
If 
\begin{equation*}
\alpha _2 = \frac{1}{2} \, (8 + \epsilon -\beta) \, \, ,
\end{equation*}
then 
\begin{equation*}
\triangle = 0 \, \, , 
\end{equation*}
and $h_1$ has in addition the double root $-1$.
If 
\begin{equation*}
\alpha _2 < \frac{1}{2} \, (8 + \epsilon -\beta) \, \, ,
\end{equation*}
then 
\begin{equation*}
\triangle < 0 \, \, , 
\end{equation*}
and $h_1$ has in addition $2$ conjugate complex roots. 
Hence, in all these cases, $h_1$ has precisely $2$ positive roots, 
the latter given by ${\bar{y}}_{0}$ and ${\bar{y}}_{1}$. As a consequence, 
\begin{equation*}
\begin{cases}
h_1(y) > 0 & \text{for $0 \leqslant y < {\bar{y}}_{0}$} \\
h_1({\bar{y}}_{0}) = 0 \text{} \\
h_1(y) < 0 & \text{for ${\bar{y}}_{0} < y < {\bar{y}}_{1}$} \\
h_1({\bar{y}}_{1}) = 0 \text{} \\
h_1(y) > 0 & \text{for $y > {\bar{y}}_{1}$} 
\end{cases}
\, \, .
\end{equation*}
\end{proof}

\begin{theorem} \label{instabilitytheorem}
If 
\begin{equation} \label{instabilitycondition}
\alpha_1 > 2 + \epsilon \, \, \wedge \, \, 
0 < \beta < 8 + \epsilon \, \, ,
\end{equation}
then the open first quadrant contains precisely $1$ root of $q$.
\end{theorem}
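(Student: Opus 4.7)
The plan is to apply the argument principle on the counterclockwise boundary $\Gamma_R$ of the open first-quadrant sector $\{z\in\mathbb{C}:|z|<R,\ 0<\arg z<\pi/2\}$ for large $R$, so that the number of roots of $q$ inside equals $\tfrac{1}{2\pi}$ times the net change of $\arg q$ around $\Gamma_R$. Under the hypothesis $\alpha_1>2+\epsilon$ and $0<\beta<8+\epsilon$ one has in particular $\alpha_1\geqslant 0$ and $\beta\leqslant 16+3\epsilon$, so Lemma \ref{basiclemma1} and parts (i) and (iii) of Lemma \ref{basiclemma2} are all available. I would decompose $\Gamma_R=\Gamma_1\cup\Gamma_2\cup\Gamma_3$ into the real segment from $0$ to $R$, the quarter arc $R e^{i\theta}$, $\theta\in[0,\pi/2]$, and the imaginary segment from $iR$ down to $0$, and compute each contribution separately.

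On $\Gamma_1$, Lemma \ref{basiclemma1} tells us that $\mathrm{Im}\,q(x)=\tfrac{4\alpha_2}{\beta+\epsilon}\,x(1-x^2)$ vanishes on $[0,R]$ only at $x=0$ and $x=1$, and at those points $\mathrm{Re}\,q>0$ (the explicit values $q(0)=1$ and $q(1)=16(3+\epsilon)/(\beta+\epsilon)$). Hence $q(\Gamma_1)$ avoids the closed negative real axis, so a continuous branch of $\arg q$ along $\Gamma_1$ remains in $(-\pi,\pi)$; combined with $\arg q(0)=0$ and $\arg q(R)\to 0$ as $R\to\infty$ (again from Lemma \ref{basiclemma1}), the total change along $\Gamma_1$ is $0$. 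On $\Gamma_2$, since $q$ is monic of degree $4$, $q(Re^{i\theta})=R^{4}e^{4i\theta}(1+O(R^{-1}))$ uniformly in $\theta$, so the change along $\Gamma_2$ tends to $4\cdot\pi/2=2\pi$ as $R\to\infty$.

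The most delicate piece is $\Gamma_3$. Traversing it with $y$ decreasing from $R$ to $0$, Lemma \ref{basiclemma2}(i) gives $\mathrm{Im}\,q(iy)<0$ strictly for all $y>0$, while Lemma \ref{basiclemma2}(iii) forces $\mathrm{Re}\,q(iy)>0$ on $(\bar{y}_1,R]$, $\mathrm{Re}\,q(iy)<0$ on $(\bar{y}_0,\bar{y}_1)$, and $\mathrm{Re}\,q(iy)>0$ on $(0,\bar{y}_0)$, with zeros exactly at $\bar{y}_1$ and $\bar{y}_0$. Thus $q(iy)$ runs through the closed fourth quadrant, crosses the negative imaginary axis at $i\bar{y}_1$ (where $\arg=-\pi/2$), enters the closed third quadrant, re-crosses the negative imaginary axis at $i\bar{y}_0$ (again $\arg=-\pi/2$), returns to the closed fourth quadrant, and ends at $q(0)=1$. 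Because each closed quadrant with the origin removed is simply connected, the argument change along each sub-arc is determined by its endpoints; summing $(-\pi/2-0)+(-\pi/2-(-\pi/2))+(0-(-\pi/2))=0$ gives net change $0$ along $\Gamma_3$.

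Putting the three pieces together yields total change $0+2\pi+0=2\pi$, so by the argument principle $q$ has exactly one root in the open first quadrant. The main obstacle is the careful bookkeeping on $\Gamma_3$: one must rule out the possibility that the third-quadrant sub-arc winds around the origin, and this is precisely what simple-connectedness of the closed third quadrant minus the origin provides, given the strict sign information from Lemma \ref{basiclemma2}.
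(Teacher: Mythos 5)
Your proof is correct and follows essentially the same route as the paper: the argument principle applied to the boundary of the quarter-disk $U_{R}(0)\cap(0,\infty)^2$ for large $R$, with Lemmas~\ref{basiclemma1} and \ref{basiclemma2} supplying the sign information on the two straight edges and the leading term $R^4e^{4i\theta}$ accounting for the increase of $2\pi$ on the arc. Your endpoint bookkeeping on the imaginary-axis segment (confining each sub-arc to a closed quadrant so that the argument change is forced by its endpoint values) is somewhat more explicit than the paper's narrative tracing of the image curve, but the underlying argument is identical.
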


\begin{proof}
For the proof, we use the argument principle. We consider $q$ on the 
intersection $D$ of $U_{R}(0)$ with the open first quadrant, where 
$R > 0$ is sufficiently large. As a consequence of the conditions 
(\ref{instabilitycondition}) and according to Lemmas~\ref{basiclemma1},~\ref{basiclemma2}, 
there are no roots of $q$ on the boundary of $D$.
For $R > 0$ and $\theta \in [0,\pi/2]$, it follows that 
\begin{align*}
q\left(R e^{i \theta}\right) & = \left(R e^{i \theta}\right)^4  + \frac{4 R e^{i \theta}}{\beta +\epsilon} \left\{
(\alpha_1 + \epsilon + 2) \left(R e^{i \theta}\right)^2 - [\alpha_1 -(\epsilon + 2)] \right\} \\
& \, \, \, \, \, \, \, \,  
+ 
2 \, \frac{16 - \beta + 3 \epsilon}{\beta +\epsilon} \, \left(R e^{i \theta}\right)^2 + 1 \\
& = R^4  e^{4 i \theta} + \frac{4 R e^{i \theta}}{\beta +\epsilon} \left[
(\alpha_1 + \epsilon + 2) R^2 e^{2 i \theta} - [\alpha_1 -(\epsilon + 2)] \right]
\\ 
& \, \, \quad  + 2 \, \frac{16 - \beta + 3 \epsilon}{\beta +\epsilon} \, 
R^2 e^{2 i \theta} + 1 \\
&=  R^4  \left\{ 
e^{4 i \theta} + \frac{4 e^{i \theta}}{R(\beta +\epsilon)} \left[
(\alpha_1 + \epsilon + 2) e^{2 i \theta} - \frac{1}{R^2} \, [\alpha_1 -(\epsilon + 2)] \right] \right. \\
& \quad \, \,  \qquad \left.
+ \frac{2}{R^2} \, \frac{16 - \beta + 3 \epsilon}{\beta +\epsilon} \, 
e^{2 i \theta} + \frac{1}{R^4} \right\} \, \, .
\end{align*}
We have the following parametrisations of the image of the boundary 
of $D$ under $q$:
\begin{align*}
& \left(\, 
[0,R] \rightarrow {\mathbb{C}} , x \mapsto q(x)
\,\right) \, \, , \\
& \left(\, 
[0,\pi/2] \rightarrow {\mathbb{C}} , \theta \mapsto q\left(R e^{i \theta}\right)
\,\right) \, \, , \\
& \left(\, 
[0,R] \rightarrow {\mathbb{C}} , y \mapsto q(i (R - y))
\,\right) \, \, .
\end{align*}
Hence it follows, according to Lemmas~\ref{basiclemma1},~\ref{basiclemma2} and for sufficiently 
large $R > 0$, that these parametrisations, starting from the point $(0,1)$, 
through the open $4$-th quadrant, into the open $1$-st quadrant, through 
the open $2$-nd and $3$-rd quadrants, back into the open $4$-th quadrant, 
crossing the imaginary axis into the open $3$-rd quadrant, crossing the 
imaginary axis again into the open $4$-th quadrant, before reaching the point
$(0,1)$ again. Thus the increase in argument of $q$ around the boundary of
$D$ is $2 \pi$, and the open $1$-st quadrant contains precisely $1$ root of $q$.
\end{proof}

\begin{theorem} \label{maintheorem}
If $m, k \in {\mathbb{N}}$ are such that
\begin{equation} \label{newinstabilitycondition1}
m \geqslant 2k+1 + \sqrt{6 k^2 + 6 k + 1} \, \, ,
\end{equation}
and $a \in [0,M)$ is such 
\begin{equation} \label{newinstabilitycondition2}
1 > \frac{a}{M} > \frac{2 \sqrt{6}}{5} \, \frac{\sqrt{1 + \frac{1}{m} \left[2k+1 + \frac{k(k+1)}{m}\right]}}{
1 + \frac{2/5}{m} \left[2k+1 + \frac{k(k+1)}{m}\right]
}  \, \, ,
\end{equation}
then
the open interval 
\begin{equation*}
I := \left(\sqrt{\frac{1}{6} \, l(l+1)} \, \, , \, \frac{\frac{m a}{M}}{2\left(1 + \sqrt{1 - \frac{a^2}{M^2}}\right)} 
\right) 
\end{equation*}
is non-empty, and for every $\mu M \in I$,   
the open first quadrant contains precisely $1$ root of $q$. We note that 
\begin{equation*}
\frac{2 \sqrt{6}}{5} \approx 0.979796 \, \, .
\end{equation*}
\end{theorem}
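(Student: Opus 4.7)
The plan is to deduce the result directly from Theorem~\ref{instabilitytheorem}: I show that the membership $\mu M \in I$ encodes precisely the two hypotheses $\alpha_1 > 2 + \epsilon$ and $0 < \beta < 8 + \epsilon$, after which everything reduces to checking algebraically that $I$ is non-empty under conditions (\ref{newinstabilitycondition1}) and (\ref{newinstabilitycondition2}).

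For the first part, I translate the two endpoints of $I$ into conditions on $(\alpha_1,\beta)$. The lower endpoint $\mu M > \sqrt{l(l+1)/6}$ is equivalent to $\mu^2 M^2 > l(l+1)/6$, i.e.\ to $\beta = 6 - l(l+1)/(\mu^2 M^2) > 0$; and since $\epsilon > 0$, the bound $\beta \leqslant 6 < 8 + \epsilon$ is automatic. The upper endpoint, rearranged, reads $2 \mu M (1 + \sqrt{1 - a^2/M^2}) < m a/M$, which is exactly $2 + \epsilon < m a/(\mu M^2) = \alpha_1$. Hence every $\mu M \in I$ satisfies the hypotheses of Theorem~\ref{instabilitytheorem}, yielding precisely one root of $q$ in the open first quadrant.

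For the second part, I must show that $I \neq \emptyset$ is equivalent to (\ref{newinstabilitycondition1}) together with (\ref{newinstabilitycondition2}). Setting $t := a/M$ and $s := \sqrt{1 - t^2}$, the non-emptiness condition, after squaring and using $t^2 = (1-s)(1+s)$, collapses to
\[
s\,[\,3 m^2 + 2 l(l+1)\,] < 3 m^2 - 2 l(l+1).
\]
This requires $3 m^2 > 2 l(l+1)$; substituting $l = m + k$ gives $m^2 - 2 m (2 k + 1) - 2 k(k+1) > 0$, which by completing the square is equivalent to~(\ref{newinstabilitycondition1}). Given (\ref{newinstabilitycondition1}), the remaining inequality is solvable in $s \in [0,1)$, and squaring the resulting explicit bound on $s$ together with the identity $D^2 - N^2 = 24\,m^2 l(l+1)$ (for $N := 3m^2 - 2l(l+1)$, $D := 3m^2 + 2l(l+1)$) produces
\[
t > \frac{2\sqrt{6}\,m \sqrt{l(l+1)}}{3 m^2 + 2 l(l+1)}.
\]
Writing $l(l+1) = m^2 (1 + B)$ with $B := \frac{1}{m}\bigl[\,2 k + 1 + \tfrac{k(k+1)}{m}\,\bigr]$ gives $3 m^2 + 2 l(l+1) = 5 m^2 (1 + 2 B/5)$, and the bound on $t$ then simplifies to exactly (\ref{newinstabilitycondition2}).

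The analytic content is already packaged in Theorem~\ref{instabilitytheorem}; the remaining work is only algebraic bookkeeping. The only real care is needed in the final factorisation, where the constants $2\sqrt{6}/5$ and $2/5$ must emerge cleanly from $3 m^2 + 2 l(l+1) = m^2(5 + 2 B)$ together with the identity $6\sqrt{2/3} = 2\sqrt{6}$; this is where I would double-check the arithmetic most carefully, to avoid a sign or factoring slip.
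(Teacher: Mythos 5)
Your proposal is correct and follows the same overall architecture as the paper's proof: first show that $\mu M \in I$ is exactly the conjunction of the two hypotheses $\alpha_1 > 2+\epsilon$ and $0 < \beta < 8+\epsilon$ of Theorem~\ref{instabilitytheorem} (with $\beta \leqslant 6 < 8+\epsilon$ automatic), then characterise non-emptiness of $I$ algebraically. Where you genuinely diverge is in the second step: the paper manipulates the inequality $\tfrac{1}{6}l(l+1) < \tfrac{(a/M)^2}{(2+\epsilon)^2}m^2$ through a long chain of equivalences in the variable $a^2/M^2$, and must separately verify that $\bigl[1+\tfrac{3m^2}{2l(l+1)}\bigr]\tfrac{a^2}{M^2} - 2 > 0$ before squaring, which is precisely where condition (\ref{newinstabilitycondition1}) is invoked. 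Your substitution $s = \sqrt{1-t^2}$ together with the factorisation $t^2 = (1-s)(1+s)$ linearises the inequality to $s\,[3m^2+2l(l+1)] < 3m^2 - 2l(l+1)$, so that (\ref{newinstabilitycondition1}) appears transparently as the positivity of the right-hand side and the only squaring occurs between two manifestly nonnegative quantities; the identity $D^2 - N^2 = 24\,m^2 l(l+1)$ and the rewriting $l(l+1) = m^2(1+B)$ then recover (\ref{newinstabilitycondition2}) exactly as in the paper. The two arguments land on the identical intermediate bound $a^2/M^2 > 24\,m^2 l(l+1)/[3m^2+2l(l+1)]^2$, so nothing is lost; your route simply eliminates the most delicate part of the paper's bookkeeping.
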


\begin{proof}
According to Theorem~\ref{instabilitytheorem}, if 
\begin{align*}
\alpha_1 = \frac{ma}{\mu M^2} > 2 + \epsilon \, \, \wedge \, \, 
0 < \beta = 6 - 
\frac{l(l+1)}{\mu^2 M^2} < 8 + \epsilon \, \, ,
\end{align*}
then 
the open first quadrant contains precisely $1$ root of $q$.
We note the equivalence of the following inequalities
\begin{align*}
&  0 < 6 - \frac{l(l+1)}{\mu^2 M^2} < 8 + \epsilon \, \, \Leftrightarrow \, \, 
0 >  - 6 +  \frac{l(l+1)}{\mu^2 M^2}  > - (8 + \epsilon) \, \, , \\
&  6 >  \frac{l(l+1)}{\mu^2 M^2} > - (2 + \epsilon) \, \,\Leftrightarrow \, \,
6 >  \frac{l(l+1)}{\mu^2 M^2} \, \, , \, \, \mu^2 M^2 > \frac{1}{6} \, l(l+1) 
\, \, , 
\end{align*}
where we used that $l \geqslant 0$,
as well as
\begin{align*}
& 2 + \epsilon < \frac{m a}{\mu M^2} \, \, \Leftrightarrow \, \, 
\mu M  < \frac{m a}{(2 + \epsilon) M}  \, \, \Leftrightarrow \, \,
\mu^2 M^2  < \frac{m^2 a^2}{(2 + \epsilon)^2 M^2} \wedge m \geqslant 0
.
\end{align*}
Hence these $2$ inequalities can be joined to 
\begin{equation*}
\frac{1}{6} \, l(l+1) < \mu^2 M^2 <  \frac{(a/M)^2}{(2 + \epsilon)^2} \, m^2
\wedge m \geqslant 0 \, \, .
\end{equation*}
Since $l = m + k $, where $k \in {\mathbb{N}}$, and 
if  
\begin{equation*}
m \geqslant 2k+1 + \sqrt{6 k^2 + 6 k + 1} \, \, , 
\end{equation*}
we note the equivalence 
of the following inequalities
\begin{align*}
& \frac{1}{6} \, l(l+1) < \frac{(a/M)^2}{(2 + \epsilon)^2} \, m^2 \\
&  \Leftrightarrow \frac{l(l+1)}{6 m^2} < \frac{(a/M)^2}{(2 + \epsilon)^2} = 
\frac{\frac{a^2}{M^2}}{4 \left(1 + \sqrt{1 - \frac{a^2}{M^2}}
\,  \right)^2} \\
& \Leftrightarrow \frac{2l(l+1)}{3m^2} \left(1 + \sqrt{1 - \frac{a^2}{M^2}} 
\,  \right)^2 < \frac{a^2}{M^2} \\
& \Leftrightarrow \frac{2l(l+1)}{3m^2} \left(2  + 2 \sqrt{1 - \frac{a^2}{M^2}} - \frac{a^2}{M^2}
\,  \right) < \frac{a^2}{M^2} \\
& \Leftrightarrow 
2  + 2 \sqrt{1 - \frac{a^2}{M^2}} - \frac{a^2}{M^2}
< \frac{3m^2}{2l(l+1)} \, \frac{a^2}{M^2} \\
& \Leftrightarrow 
2 \sqrt{1 - \frac{a^2}{M^2}}
< \left[1 + \frac{3m^2}{2l(l+1)}\right] \frac{a^2}{M^2} - 2 \\
& \Leftrightarrow 
4 \left(1 - \frac{a^2}{M^2}\right)
< \left\{\left[1 + \frac{3m^2}{2l(l+1)}\right] \frac{a^2}{M^2} - 2 \right\}^2 \\
& \Leftrightarrow - 4 \, \frac{a^2}{M^2} < \left[1 + \frac{3m^2}{2l(l+1)}\right]^2 \frac{a^4}{M^4} - 4 \left[1 + \frac{3m^2}{2l(l+1)}\right] \frac{a^2}{M^2} \\
& \Leftrightarrow 0 < \left[1 + \frac{3m^2}{2l(l+1)}\right]^2 \frac{a^4}{M^4} - \frac{6 m^2}{l(l+1)} \, \frac{a^2}{M^2} \\
& \Leftrightarrow 0 < \left\{ \left[1 + \frac{3m^2}{2l(l+1)}\right]^2 \frac{a^2}{M^2} - \frac{6 m^2}{l(l+1)} \right\} \frac{a^2}{M^2} \\
& \Leftrightarrow 0 < \left[1 + \frac{3m^2}{2l(l+1)}\right]^2 \frac{a^2}{M^2} - \frac{6 m^2}{l(l+1)} \\
& \Leftrightarrow \frac{6 m^2}{l(l+1)} < \left[1 + \frac{3m^2}{2l(l+1)}\right]^2 \frac{a^2}{M^2}  \\
& \Leftrightarrow \frac{6 m^2}{l(l+1)} < \left[\frac{3m^2 + 2l(l+1)}{2l(l+1)}\right]^2 \frac{a^2}{M^2} \\
& \Leftrightarrow \frac{6 m^2}{l(l+1)} \left[\frac{2l(l+1)}{3m^2 + 2l(l+1)}\right]^2 <  \frac{a^2}{M^2} \\
&  \Leftrightarrow \frac{24 m^2l(l+1)}{[3m^2 + 2l(l+1)]^2} <  \frac{a^2}{M^2} \\
&  \Leftrightarrow \frac{24 m^2 (m+k)(m+k+1)}{[3m^2 + 2(m+k)(m+k+1)]^2} <  \frac{a^2}{M^2} \\
&  \Leftrightarrow \frac{24 (1+\frac{k}{m})(1+\frac{k+1}{m})}{[3 + 2(1+\frac{k}{m})(1+\frac{k+1}{m})]^2} <  \frac{a^2}{M^2} \\
&  \Leftrightarrow \frac{24}{25} \, \frac{1 + \frac{1}{m} \left[2k+1 + \frac{k(k+1)}{m}\right]}{\left\{1 + \frac{2/5}{m} \left[2k+1 + \frac{k(k+1)}{m}\right]
\right\}^2} <  \frac{a^2}{M^2} \, \, ,
\end{align*}
where we note for the validity of these equivalences that 
\begin{align*}
& \frac{6 m^2}{3m^2 + 2l(l+1)} \geqslant 1 
\Leftrightarrow 6 m^2 \geqslant 3m^2 + 2l(l+1) \\
& \Leftrightarrow 3 m^2 \geqslant 2 l(l+1) = 2 (m+k)(m+k+1) \Leftrightarrow
m^2 \geqslant 2(2k+1) m + 2 k(k+1) \\
& \Leftrightarrow m^2 - 2(2k+1) m - 2 k(k+1) \geqslant 0 \Leftrightarrow
[m - (2k+1)]^2 - (2k+1)^2 - 2 k(k+1) \geqslant 0 \\
& \Leftrightarrow  
[m - (2k+1)]^2 \geqslant 6 k^2 + 6 k + 1 \Leftrightarrow 
m \geqslant 2k+1 + \sqrt{6 k^2 + 6 k + 1} \, \, ,
\end{align*}
and, if 
\begin{equation*}
m \geqslant 2k+1 + \sqrt{6 k^2 + 6 k + 1} \, \, , 
\end{equation*}
then 
\begin{align*}
& \frac{a^2}{M^2} > \frac{24 m^2l(l+1)}{[3m^2 + 2l(l+1)]^2}  \Leftrightarrow 
\frac{a^2}{M^2} > \frac{6 m^2}{3m^2 + 2l(l+1)} \, \frac{4 l(l+1)}{3m^2 + 2l(l+1)} \\
& \Rightarrow 
\frac{a^2}{M^2} > \frac{4 l(l+1)}{3m^2 + 2l(l+1)}
\Leftrightarrow \left[1 + \frac{3m^2}{2l(l+1)}\right] \frac{a^2}{M^2} - 2 > 0 
\, \, .
\end{align*}
\end{proof}


\section{Discussion of the Results}

The present paper is a follow-up of our previous paper that derives a slightly simplified model equation
for the Klein-Gordon equation, describing the propagation of a scalar field of mass $\mu$
in the background of a rotating black hole and, among others things, supports 
the instability of the field down to 
$a/M \approx 0.97$. The latter result was derived numerically. This paper
gives corresponding rigorous results, supporting instability of the field down to 
$a/M \approx 0.979796$. This result supports claims 
of previous rigorous as well as analytical and numerical investigations that show 
instability of 
the massive Klein-Gordon field for $a/M$ 
extremely close to $1$. 
\newline
\linebreak 
From here, mathematical investigation could proceed 
in $2$ directions. First, it might be possible to use the 
model for the proof of the instability of the massive 
Klein-Gordon equation in a Kerr background, using a perturbative 
approach, in this way complementing the result of Shlapentokh-Rothman, 
(\cite{shlapen}, 2014). Another direction consists in further simplification 
of the model in order to find the mathematical root 
of the instability as well as an abstraction to a larger 
class of equations that includes the massive Klein-Gordon 
equation on a Kerr background. It is tempting to assume that the 
instability is due to particular commutation properties
of the operators $A$ and $B$ governing the evolution equation, 
(\ref{abstractwaveequation}).

\section*{Acknowledgments} 
H.B. is thankful for the hospitality and support
of the `Department of Gravitation and Mathematical Physics',
(ICN, Miguel Alcubierre), 
Universidad Nacional Autonoma de 
Mexico, Mexico City, Mexico and 
the `Division Theoretical Astrophysics' (L. Rezzolla) of the Institute of Theoretical 
Physics at the Goethe University Frankfurt, Germany. 
This work was supported in part by CONACyT
grants 82787 and 167335, DGAPA-UNAM through grant IN115311, SNI-M\'exico, and 
the ERC Synergy Grant “BlackHoleCam: Imaging the Event Horizon of Black Holes” (Grant No. 610058). M.M. acknowledges DGAPA-UNAM for a postdoctoral grant.

\end{document}